\documentclass[a4paper, 12pt]{article}
\usepackage{cmap} 
\usepackage[T2A]{fontenc}
\usepackage[utf8x]{inputenc}
\usepackage[english]{babel}
\usepackage{amsmath,amssymb}
\usepackage{amsthm}
\usepackage{graphicx,amssymb,amsmath,t1enc,hyperref}
\usepackage{yfonts}
\usepackage{hyperref}
\usepackage{subfig}
\usepackage{IEEEtrantools}
\usepackage{multirow}
\usepackage[]{algorithm2e}
\usepackage{dsfont}
\usepackage{tikz-cd}
\usepackage[left=3cm, right=3 cm]{geometry}
\usepackage{placeins}

\usetikzlibrary{arrows}
\tikzset
{
symbol/.style=
{
draw=none,
every to/.append style=
{
edge node={node [sloped, allow upside down, auto=false]{$#1$}}}
}
}

\theoremstyle{plain}
\newtheorem{theorem}{Theorem}
\newtheorem*{theorem*}{Theorem}
\newtheorem{definition}{Definition}
\newtheorem{lemma}{Lemma}

\newtheorem{proposition}{Proposition}

\newcommand{\N}{\ensuremath{\mathbb{N}}}
\newcommand{\R}{\ensuremath{\mathbb{R}}}
\newcommand{\Z}{\ensuremath{\mathbb{Z}}}
\newcommand{\T}{\ensuremath{\mathcal{T}}}

\newcommand{\A}{\ensuremath{\mathcal{A}}}
\newcommand{\HH}{\ensuremath{\mathcal{H}}}

\newcommand{\SSS}{\ensuremath{\mathcal{S}}}
\newcommand{\PPP}{\ensuremath{\mathcal{P}}}
\newcommand{\proj}{\ensuremath{\pi ^\prime}}

\DeclareRobustCommand*{\ora}{\overrightarrow}

\newcommand\numberthis{\addtocounter{equation}{1}\tag{\theequation}}

\title{Purely local growth of a quasicrystal}
\author{Thomas Fernique \and Ilya Galanov}

\begin{document}
\maketitle

\begin{abstract}
	Self-assembly is the process in which the components of a system, whether molecules, polymers, or macroscopic particles, are organized into ordered structures as a result of local interactions between the components themselves, without exterior guidance. In this paper, we speak about the self-assembly of aperiodic tilings. Aperiodic tilings serve as a mathematical model for quasicrystals - crystals that do not have any translational symmetry. Because of the specific atomic arrangement of these crystals, the question of how they grow remains open. In this paper, we state the theorem regarding purely local and deterministic growth of Golden-Octagonal tilings using the algorithm initially introduced in \cite{FG20}. Showing, contrary to the popular belief, that local growth of aperiodic tilings is possible.
\end{abstract}

\section{Introduction.}

Quasicrystals are physical solids with aperiodic atomic structure and symmetries forbidden in classical crystallography. They were discovered by Dan Shechtman in 1982 who subsequently won the Nobel prize for his discovery in 2011. 

It took Shechtman two years to publish his discovery, the paper \cite{SBGC84} with the report of analuminum-magnesium alloy with, as it was shown via electron diffraction, $10$-\emph{fold} or \emph{icosahedral} symmetry, \emph{i.e.}, rotational symmetry with angle $2\pi \over 10$ which is forbidden in periodic crystals by crystallographic restriction theorem. At the same time, sharp Bragg peaks of the diffraction pattern suggested long-range order in the new-found material. That was a clear violation of the fundamental principles of solid-state physics at the time. Two mentioned observations suggested that atoms in the material are structured in a non-periodic manner. The paper published in December 1984 coauthored by Levine and Steinhardt \cite{LeSt84} named the phenomenon as a \emph{quasicrystallinity} and the novel substance as a \emph{quasiperiodic crystal} or \emph{quasicrystal}.

To represent the peculiar atomic structure of quasicrystals Levine and Steinhardt in \cite{LeSt84} proposed the \emph{Penrose tilings} \cite{Pen74}. A tiling is a covering of a Euclidian plane by given geometric shapes without gaps and  overlaps. The set of basic shapes is called a \emph{prototile set} and the elements are called \emph{prototiles} or \emph{tiles}.

\begin{definition}
A prototile set is called aperiodic if admits only aperiodic tilings.
\end{definition}

Many of the properties of Penrose tilings can be derived from a so-called \emph{cut-and-project scheme}. This method, first introduced by DeBruijn \cite{DB81} in 1981, is based on the discovery that Penrose tilings can be obtained by projecting certain points from higher dimensional lattices to a $2$-dimensional plane. The method was subject of many generalizations, see \cite{BG13} for a comprehensive overview. Apart from Penrose tilings, it allows generating various other aperiodic tilings with symmetries forbidden in classical crystals but found in quasiperiodic ones.

Moreover, DeBruijn showed that the collection of $3$-patterns of a Penrose tiling uniquely determines the cutting slope in $5$-dimensional space. In general, the ability to fix the slope in higher dimensional space solely through finite patterns is referred to as \emph{local rules}. This subject was studied my many researchers \cite{Bee82,Lev88,Soc90,LPS92,Le95,Le97,Kat95,LP95,BF13,BF15a,BF15b,BF17,BF20}. In particular, a complete characterization of local rules for planar octagonal tilings is given in \cite{BF17}. Up to date aperiodic tilings and tilings generated via cut-and-project method is one of the main mathematical models to represent the atomical structure of a quasicrystal.

The discovery of Schechtman inspired other groups to search for quasicrystals. Within a few years, there had been reported quasicrystals with various other \emph{forbidden} symmetries, including decagonal \cite{Bend85}, pentagonal \cite{BaHe86}, octagonal \cite{WaCheKu87}, and dodecagonal \cite{Ish85}. Despite the abundance of quasicrystals synthesized in labs, there was not a consensus whether quasicrystallinity is a fundamental state of matter or the quasicrystals appear only as metastable phases under very specific (and unnatural) conditions. 

The common view was that aperiodic atomic structure is too complicated to be stable. Roger Penrose once said ``\emph{For this reason, I was somewhat doubtful that nature would actually produce such quasicrystalline structures spontaneously. I couldn't see how nature could do it because the assembly requires non-local knowledge}''. The discussion on what governs the stability of quasicrystals is still ongoing. Possible stability mechanisms include \emph{energetic stabilization} and \emph{entropy stabilization}, see~\cite{Boi06} for an overview.

Energy stabilization scenario suggests that quasicrystals can indeed be a state of minimal energy of the system, as in the case of classical crystals, and that short-range atomic interactions suffice to provide an aperiodically ordered structure. This is the case where the tilings with local rules are used as the main model.

Entropy stabilization scenario suggests that quasicrystals are always metastable phases and that aperiodic atomic order is governed by \emph{phason flips} (a local rearrangement of atoms that leave the free energy of the system unchanged) and structural disorder even if the state is not energetically preferred. This is the case of random tiling model \cite{Henley99}, \cite{Cohn00}.

Energy stabilization scenario would allow quasicrystals to be formed naturally. For instance, in~\cite{Nagao15} the growth of quasicrystals was directly observed with high-resolution transmission electron microscopy. Edagawa and his team produced a decagonal quasicrystal consisting of $\textrm{Al}_{70.8}\textrm{Ni}_{19.7}\textrm{Co}_{9.5}$. The growth process featured frequent errors-and-repair procedures and maintained nearly perfect quasiperiodic order at all times. Repairs, as concluded, were carried via phason flips, which is qualitatively different from the ideal growth models.

Another argument supporting the theory that quasicrystals can be stabilized via short-range interactions was given by Onoda \emph{et al} in~\cite{OSDS88} \cite{OSDS88,Soc91,HSS16}. They found an algorithm for growing a perfect Penrose tiling around a certain \emph{defective} seed using only the local information. A defective seed is a pattern made from Penrose rhombuses which is not a subset of any Penrose tiling. The finding broke the belief that non-local information is essential for building quasicrystalline types of structure.

Crystals can grow, both classic ones and the quasicrystals, local interactions between particles guide them into their respected places along the crystal lattice. Let us view the growth of crystals from the viewpoint of pure geometry. The question transforms into the following: how to algorithmically assemble the atomic structure of a growing crystal using only the local information? Classical crystals exhibit the unit cell which makes the question trivial: as long as we can see an instance of the unit cell, we know the structure. The question becomes much harder as we proceed to quasicrystals as there is no translational symmetry and no unit cell.

Here we aim to understand whether it is possible to grow an aperiodic tiling via a local self-assembly algorithm. The meaning of the locality constraint is the following: at each step, the algorithm must have access only to a finite neighborhood around a randomly chosen vertex of a seed - a finite pattern of a tiling we are trying to expand. Given the local neighborhood, the algorithm must identify the set of vertices which are to be added to the seed (or it may decide that there is not enough information to add a tile or a vertex and do nothing). Finally, the algorithm must not store any information between the steps.

The algorithm we propose, initially announced in \cite{FG20}, is a generalization of the OSDS rules algorithm. Similarly, we add the vertices if and only if they are uniquely determined by the neighbourhood but, and that is the crucial difference, we allow the forced vertices to be distanced from the growing pattern and not share any edges with it. This allows us to jump over the undefined areas and, as simulations suggest, and grow the \emph{cylinder set} of a given pattern or, as it was called in \cite{Gar89}, the \emph{empire}.

In this paper we prove the algorithm as able to grow particular aperiodic tilings with local rules namely the \emph{Golden-Octagonal tilings} initially defined in~\cite{BF15b}.

\begin{theorem*}
	For Golden-Octagonal tilings, for any finite pattern $\PPP$, there exists a finite seed $\SSS \supset \PPP$ and a growth radius $r$, such that the self-assembly Algorithm~\ref{algo:growth} builds the cylinder set (or empire) of $\PPP$.
\end{theorem*}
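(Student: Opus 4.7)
My plan is to analyze the algorithm through the cut-and-project description of the Golden-Octagonal tilings. Each such tiling is the canonical projection of a slab of $\Z^4$ onto a fixed two-plane whose slope involves the golden ratio; its vertices are the lattice points whose image in the internal (orthogonal) plane lies in a fixed octagonal window $W$. Writing $v^*$ for the image in the internal plane of a lattice point $v$, the empire (cylinder set) of a finite pattern $\PPP$ is exactly the set of lattice points whose internal image lies in the convex sub-window $W_\PPP := \bigcap_{v \in \PPP}(W - v^*) \subseteq W$. Hence the empire is completely encoded by the polygon $W_\PPP$.

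The first step would be to use the local rules proved in~\cite{BF15b} to fix the slope from a finite seed: there is a radius $\rho$ such that any tiling all of whose $\rho$-patterns appear in some Golden-Octagonal tiling must itself be Golden-Octagonal with the correct slope. I would take $\SSS$ to be a finite sub-pattern of the empire of $\PPP$, containing $\PPP$ together with a sufficient $\rho$-collar, so that every valid extension of $\SSS$ is forced to be Golden-Octagonal. Because $\SSS$ lies inside the empire of $\PPP$, every extension of $\SSS$ is also an extension of $\PPP$; so $\SSS$ and $\PPP$ have the same empire, and growing the empire of $\SSS$ is equivalent to growing the empire of $\PPP$.

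Next, I would argue that every empire vertex is eventually placed by the algorithm. At radius $r$, the forcing of a candidate vertex $v$ translates in the internal plane into the condition that the intersection of window-translates contributed by the current $r$-neighborhood already confines $v^*$ to $W_\PPP$. Exploiting the crucial feature that the algorithm may place forced vertices far from the already-grown region, I would argue by iteration that the sub-window associated to the grown pattern shrinks to $W_\PPP$, so that each empire vertex eventually becomes forced.

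The main obstacle is to prove that a single growth radius $r$ suffices for \emph{every} vertex of the empire. A naive per-vertex argument would yield an $r$ depending on the distance from $v$ to $\SSS$, which would not be admissible. The key point will be to combine (i) the bounded combinatorial complexity of Golden-Octagonal local configurations and (ii) the Diophantine control that the golden ratio imposes on the acceptance window, so as to show that every empire vertex sits inside a bounded-diameter configuration already forcing its internal coordinate to lie in $W_\PPP$. A uniform bound on the diameter of such configurations then furnishes the desired radius $r$ and completes the proof.
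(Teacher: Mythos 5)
Your setup is sound and matches the paper's: you correctly encode the empire as the set of lattice points whose internal projection lies in the sub-window $W_\PPP$, and you correctly observe that the seed must be taken \emph{inside} the empire of $\PPP$ (otherwise $[\SSS]\subsetneq[\PPP]$, and no deterministic algorithm can grow more than $[\SSS]$). But the core of the proof is missing. Your ``key point'' --- that every empire vertex sits inside a bounded-diameter configuration already forcing its internal coordinate into $W_\PPP$ --- is a restatement of the theorem, and the two ingredients you invoke to establish it (``bounded combinatorial complexity'' and ``Diophantine control'') are named but not used in any argument. The picture of a sub-window that ``shrinks to $W_\PPP$'' under iteration is also misleading: once $\PPP\subseteq\SSS\subseteq[\PPP]$, one has $W(\SSS_k)=W(\PPP)$ at every step $k$, so nothing shrinks globally; what must be shown is that the finitely many vertices visible in a single $r$-ball around a candidate vertex already pin down the admissible translates of $W$ enough to force that vertex.

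The paper's mechanism, absent from your proposal, is the following. A candidate vertex $A$ is forced by exhibiting two already-placed vertices $B,C$ on \emph{subperiod lines} through $A$ (subperiods are the rational directions contained in the slope; their internal projections are parallel to edges of the octagonal window), positioned so that $\proj B,\proj A,\proj C$ fall on consecutive edges of a translated window: convexity of $W$ then makes it impossible to place $W$ over $\proj B$ and $\proj C$ without covering $\proj A$ (Lemma~\ref{lem:cherez_odnu}). Equidistribution along subperiod lines (Lemma~\ref{lem:folk}) supplies such $B,C$ within a uniform radius $r$, which is exactly how the uniform-radius obstacle you flag is overcome. The growth is then organized as an induction on the radius of the grown ball, whose shape is controlled by Proposition~\ref{prop:shape}, with a genuine case analysis: near the corners of the octagonal growth front only three of the four subperiod lines are visible, and one needs the coincidence-of-subperiods Lemma~\ref{lem:third_intersection} together with separately grown \emph{extreme worms} (vertices projecting near a fixed edge of the window) to force those vertices; this is also where the margin $\alpha$ and the seed-size constraints~(\ref{eq:constraints_alpha}) enter. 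Without the subperiod structure and the two-points-force-a-third window geometry, your outline cannot be completed as stated.
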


\section{Definitions and the main result.}
In this section we define and discuss properties of cut-and-project method, arguably the most versatile method to generate and to study aperiodic tilings. We define notion of tilings with local rules and introduce our self-assembly algorithm. In the end of the section, we formalize the mail result. 

Octagonal tilings are simply the tilings made of rhombuses with four distinct edge-directions. This gives us six rhombus prototiles in total: 

\begin{definition}
Consider \emph{the prototile set}: 
\[
	\{ \lambda \vec{v_i} + \mu \vec{v_j}, \quad 0 \le \lambda, \mu \le 1 \}, \qquad 0 \le i,j \le 3, \qquad i\ne j,
\]
where $\vec{v_i}$ and $\vec{v_j}$ are noncollinear vectors in $\R^2$. Any tiling of $\R^2$ with the above prototile set is called \emph{octagonal}.
\end{definition}

Our to-go method for generating octagonal tilings is canonical cut-and-project scheme. Here we are closely following chapter $7$ of \cite{BG13}.

Here we define and discuss properties of canonical cut-and-project method, our to-go method to generate octagonal tilings. Here we closely follow Chapter $7$ of \cite{BG13}.

\begin{definition}
	A $n \to d$ cut-and-project scheme consists of a physical space ${E\simeq\R^d}$, an  internal space $E^\perp \simeq \R^{n-d}$, a lattice $\Z^n$ in $E \times E^\perp = \R^n$ and the two natural projections $\pi : \R^n \to E$ and $\proj : \R^n \to E^\perp$ along with conditions that $\pi |_{\Z^n}$ is injective and that $\proj$ is dense in $E^\perp$.
\end{definition}

\begin{equation}
\begin{tikzcd}
\R^d \ar[d,symbol=\supset, ""']
	& \R^d \times \R^{n-d} \ar[l, "\pi"'] \ar[r, "\proj"] \ar[d,symbol=\supset]
	& \R^{n-d} \ar[d,symbol=\supset, "~dense"]  \\ 
	\pi(\Z^n) \arrow[rr, bend right, "*"']
		& \Z^n  \ar[l, "1-1"'] \ar[r, ""]
		&\proj (\Z^n)\\ 
\end{tikzcd}
\label{eq:cps}
\end{equation}

There is a bijection between $\pi(\Z^n)$ and $\proj(\Z^n)$, hence there is a well-defined map called the \emph{star map} between $E$ and $E^\perp$: 

\[
	x \to x^* := \proj\left((\pi_{|\Z^n})^{-1}(x)\right).
\]

A bounded subset $W$ of internal space with non-empty interior is called a \emph{window} or \emph{acceptance domain}. In the \emph{canonical} cut-and-project scheme, the window is chosen to be $W = \pi([0, 1]^n)$. The window acts as a filter for points which are to be projected. Using the star map, the set of vertices of a tilings or the \emph{projection set} can be written as

\[
	\Lambda(W):= \{ x \in \pi (\Z^n) \mid x^* \in W\}.
\]

Summary of a $n \to d$ canonical cut-and-project scheme is depicted in the diagram~\ref{eq:cps}.

\begin{definition}
A \emph{lift} is an injective mapping of vertices of a pattern to a subset of $\Z^n$ done in the following manner. Let $\{ v_i \}_{i=0}^{n}$ be set of edges of a rhombus tiling $\T$ up to a translation. First, we map each $v_i$ to a basis vector $e_i$ of $\Z^n$. Afterward, an arbitrary vertex is mapped to the origin. Then a vertex $x = \sum_{i = 1}^{n} a_i v_i$ of a pattern is lifted to $\sum_{i = 1}^{4} a_i e_i$. 
\end{definition}

Thus vertices of every octagonal tiling can be lifted to $\Z^4$. Since we are interested in tilings with local rules, first, we pick a subset of octagonal tilings whose lifted vertices are close to a plane, in other words the ones which can be seen as digitizations of surfaces in higher dimensional spaces:

\begin{definition}
An octagonal tiling is called \emph{planar} if there exists a 2-dimensional affine plane $E$ in $\R^4$ such that the tiling can be lifted into the stripe $E+[0,1]^4$. Then $E$ is called the \emph{slope} of the tiling. 
\end{definition}

Now, for planar octagonal tilings we define the notion of \emph{local rules} using an $r$-altas, the collection of all $r$-patterns, as follows: 

\begin{definition}
Vertices of a tiling which are at most $r$ edges away from a given vertex $x$ are called the \emph{$r$-pattern of $x$}.
\end{definition}

\begin{definition}
The set of all $r$-patterns of a tiling up to a translation is called the $r$-\emph{atlas} and denoted by $\A_{\T}(r)$. 
\end{definition}

\begin{definition}
	A planar tiling $\T$ is said to \emph{admit local rules} if there exists $r>0$ such that, any rhombus tiling $\T^\prime$ with $\A_{\T^{'}} (r) \subset \A_{\T}(r)$, $\T^{'}$ is also a planar tiling with the slope parallel to the slope of $\T$.
\end{definition}

\begin{definition}
Consider a patch $\PPP$ of a planar octagonal tiling with slope $E$ and window $W$.
The {\em cylinder} of $\PPP$, denoted by $[\PPP]$, is the intersection of all the planar octagonal tilings which contain $\pi\PPP$ as a subset and whose slope is a translation of $E$.
We denote by $W(\PPP)$ the intersection of all the translations of $W$ containing $\pi^\prime \PPP$.
\end{definition}
One has $W(\PPP)=W([\PPP])$. The vertices of $\pi^\prime [\PPP]$ are uniformly dense in $W([\PPP])$. Given a seed $\SSS$, nothing more than $[\SSS]$ can be grown by a deterministic algorithm no matter whether it is local or not.

\begin{algorithm}
\KwData{Growth radius $r$, $r$-atlas $\mathcal{A}$ and an seed pattern $\mathcal{S}=\SSS_0$}
\For{$k\geq 0$}
{
$\mathcal{S}_{k+1}\gets \mathcal{S}_k$\;
\For{$x$ vertex of $\mathcal{S}_k$}
{
$\mathcal{V}_x\gets$ vertices of $\mathcal{S}_k$ within distance $r$ of $x$\;
$\mathcal{A}_x\gets$ patterns in $\mathcal{A}$ translated by $x$ which contain $\mathcal{V}_x$\;
$\mathcal{F}_x\gets$ the vertices in common for all patterns in $\mathcal{A}_x$\;
$\mathcal{S}_{k+1}\gets \mathcal{S}_{k}\cup\mathcal{F}_x$\;
}
}
\caption{The self-assembly algorithm.}
\label{algo:growth}
\end{algorithm}

Now we are prepared to state the main result:
\begin{theorem}
	For Golden-Octagonal tilings, for any finite pattern $\PPP$ centered at the origin, there exists a seed $\SSS \supset \PPP$ and a growth radius $r$, such that at step $h$ of the self-assembly Algorithm~\ref{algo:growth}, it builds at least $[\PPP] \cap B_0(h)$.
\label{thm:main}
\end{theorem}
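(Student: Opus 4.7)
The plan is to work in the internal space $E^\perp \simeq \R^2$ and interpret the algorithm as a progressive shrinking of the set of window-translations compatible with the current seed. I would begin by invoking the local-rules result for Golden-Octagonal tilings from~\cite{BF15b}: there exists a radius $r_0$ such that any rhombus tiling whose $r_0$-atlas is contained in $\A_{\T}(r_0)$ is forced to be planar with the slope of $\T$. Fixing the growth radius $r \geq r_0$ in Algorithm~\ref{algo:growth} then guarantees that every $r$-pattern appearing in any $\mathcal{A}_x$ comes from a tiling of the correct slope, so that from the internal-space viewpoint the set of admissible window-translations lives inside a fixed subset of $E^\perp$.

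For the seed, I would enlarge $\PPP$ to a finite patch $\SSS$ satisfying $W(\SSS) = W([\PPP])$. Such a finite $\SSS$ exists because the vertices of $\proj[\PPP]$ are uniformly dense in $W([\PPP])$: only finitely many translations of $W$ are actually binding on the boundary of $W(\PPP)$, and each can be realized by a single vertex of $[\PPP]$. Including all such ``binding'' vertices in $\SSS$ ensures that the window-constraints extracted locally from $\SSS$ already pin the compatible-translations set down to $W([\PPP])$, so that whatever the algorithm subsequently forces must lie in $[\PPP]$.

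The main step is then an induction on $h$ proving $\SSS_h \supset [\PPP] \cap B_0(h)$. Given the inductive hypothesis, for any $v \in [\PPP] \cap B_0(h)$ I would pick $x \in \SSS_{h-1}$ with $\|v-x\| \leq r$ and argue that every pattern in $\mathcal{A}_x$ contains $v$. Translated to internal space, this amounts to showing that every window-translation $t$ compatible with $\proj(\mathcal{V}_x - x)$ also forces a vertex at $v-x$, i.e., that there is $k \in \Z^4$ with $\pi(k) = v-x$ and $\proj(k) + t \in W$. Once the seed has realized $W(\SSS)=W([\PPP])$, this follows from $v \in [\PPP]$ combined with the local-rules constraint on $t$, with the induction providing enough frontier vertices around $x$ to preserve the constraint.

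The hard part will be the quantitative control required in the last step. Local rules ensure the slope is correct but leave room for lattice shifts, so one has to check that the window-shrinking caused by the binding vertices of $\SSS$ actually propagates through the algorithm's forced additions to every frontier vertex encountered at step $h$, \emph{uniformly} in $h$. This propagation requires exploiting the specific golden-ratio geometry of the Golden-Octagonal window and the way its translates overlap in $E^\perp$, together with the fact that the vertices added at earlier steps inherit the binding role of their predecessors. Showing that a single growth radius $r$ chosen once and for all suffices, rather than one that has to grow with $h$, is the central technical challenge and will likely rest on the substitutive self-similar structure of Golden-Octagonal tilings.
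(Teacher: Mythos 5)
Your skeleton (fix the slope via local rules, pass to internal space, induct on the step $h$) matches the paper's at the coarsest level, but the proposal stops exactly where the proof has to start, and the one concrete construction you do give does not do the work you assign to it. Your seed is chosen so that $W(\SSS)=W([\PPP])$ by including the globally ``binding'' vertices of $[\PPP]$. This is essentially vacuous (the paper already notes $W(\PPP)=W([\PPP])$, so any $\PPP\subset\SSS\subset[\PPP]$ has this property), and more importantly it cannot propagate: the algorithm is local, so a frontier vertex $x$ at step $h$ sees only $\mathcal{V}_x=\SSS_{h-1}\cap B_x(r)$, which for large $h$ contains none of the seed's binding vertices. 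The statement you need in the induction step --- that every window-translation compatible with $\proj(\mathcal{V}_x-x)$ forces $v-x$ --- is false for a generic nearby $x$ when $\proj v$ lies close to $\partial W([\PPP])$; whether it holds depends on \emph{which} already-grown vertices happen to project close to \emph{which} edges of the window within distance $r$ of $v$. You acknowledge this (``the central technical challenge'') and then defer it, guessing it will come from substitutive self-similarity; the paper does not use substitutions at all.

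What actually closes the gap in the paper is a specific forcing mechanism plus a strengthened induction hypothesis, both absent from your proposal. The mechanism is: (i) equidistribution of already-placed vertices along \emph{subperiod lines} (Lemma~\ref{lem:folk}), which guarantees that within a fixed radius $r$ of a frontier vertex $A$ one finds vertices $B,C$ on two of $A$'s subperiod lines whose internal projections land on prescribed edges of $W(\SSS_\alpha)$; and (ii) the observation that two points projecting into $W$ on edges adjacent to a third point's edge force that third point into $W$ (Lemma~\ref{lem:cherez_odnu}), because the octagonal window is convex with no right angles. The strengthened hypothesis is that the algorithm grows not $[\PPP]\cap B_0(h)$ directly but $([\SSS]_\alpha\cup Ext^\alpha([\SSS]))\cap B_0(h)$: the grown region is (by Proposition~\ref{prop:shape}) an octagon whose corners point along projected subperiods, so corner vertices see only three of their four subperiod lines, and one must carry the \emph{extreme worms} (vertices projecting near the boundary of the window) through the induction, together with the third-intersection Lemma~\ref{lem:third_intersection} and the quantitative seed constraints~(\ref{eq:constraints_alpha}), to force them. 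Without these ingredients your induction step cannot be completed for frontier vertices near the corners of the grown patch or near $\partial W([\PPP])$, which is precisely the case that distinguishes this theorem from a triviality.
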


\newpage

\section{Preliminaries}

First, in this section, we define the notion of subperiod, one of the most useful tools which allow us to describe and characterize octagonal cut-and-project tilings with local rules. Then we state several lemmas in preparation to prove the main result in the next section.

\subsection{Subperiods} 

\begin{definition}
For a planar $4 \to 2$ tiling with a slope $E$, a \emph{subperiod of type} $i\in\{1,2,3,4\}$ is the smallest non-zero vector in $\R^4$ which belongs to $E$ and which has exactly $3$ integer coordinates along with an irrational $i$-th coordinate.
\end{definition}

In this paper our focus is mainly on a particular family of octagonal tilings: the Golden-Octagonal tilings are the planar $4 \to 2$ tilings generated via canonical cut-and-project method whose slope $E$ is generated by:
$$
u = \begin{pmatrix}
	-1 \\ 0 \\ \varphi \\ \varphi
\end{pmatrix}, \quad
v= \begin{pmatrix}
	0 \\ 1 \\ \varphi \\1 
\end{pmatrix},
$$
where $\varphi$ is the golden ratio. Its set of subperiods $\{q_i\}$ is written as: 

$$
q_1 = \begin{pmatrix}
	1-\varphi \\ 0 \\ 1 \\ 1
\end{pmatrix}, \quad
q_2 = \begin{pmatrix}
	1 \\ \varphi \\ 1 \\ 0 
\end{pmatrix}, \quad
q_3 = \begin{pmatrix}
	0 \\ 1 \\ \varphi \\ 1 
\end{pmatrix}, \quad
q_4 = \begin{pmatrix}
	1 \\ 1 \\ 0 \\ 1-\varphi 
\end{pmatrix}.
$$

Consider a subperiod $q_i$ of a planar tiling, we denote $\lfloor q_i \rfloor$ (respectively $\lceil q_i \rceil$) an integer vector which is equal to $q_i$ everywhere except for the irrational $i$-th coordinate, instead of which we take its floor (respectively ceiling). For example, for Golden-Octagonal tiling expressions for $\lfloor q_2 \rfloor$ and $\lceil q_2 \rceil$ are written as 

$$
\lfloor q_2 \rfloor = 
\begin{pmatrix}
	1 \\ \lfloor \varphi \rfloor \\ 1 \\ 0 
\end{pmatrix}
=
\begin{pmatrix}
	1 \\ 1 \\ 1 \\ 0 
\end{pmatrix},
\quad
\lceil q_2 \rceil = 
\begin{pmatrix}
	1 \\ \lceil \varphi \rceil \\ 1 \\ 0 
\end{pmatrix}
=
\begin{pmatrix}
	1 \\ 2 \\ 1 \\ 0 
\end{pmatrix}.
$$

For a point $A \in \Z^4$, $\proj A \in W$, it is obvious that either $\proj(A + \lfloor q_i \rfloor)$ or $\proj(A + \lceil q_i \rceil)$ is in $W$ as well. Iterating these sort-of-speech jumps along a subperiod gives us the following definition: 

\begin{definition}
For a planar octagonal tiling $\T$ with local rules with a window $W$ and vertex $X\in \T$, we define a \emph{subperiod line of vertex $X$ of type $i$} as the set:
\[
	Q_i(X) = \{y \in \Z^4~|~\pi^\prime y \in W, \quad  y_j=x_j+ q_j,\quad j \in \{0,1,2,3\} \setminus \{i\}  \}.
\]
\end{definition}

\begin{figure}[ht]
\centering
\begin{minipage}[b]{0.45\linewidth}
	\includegraphics[scale=0.6]{./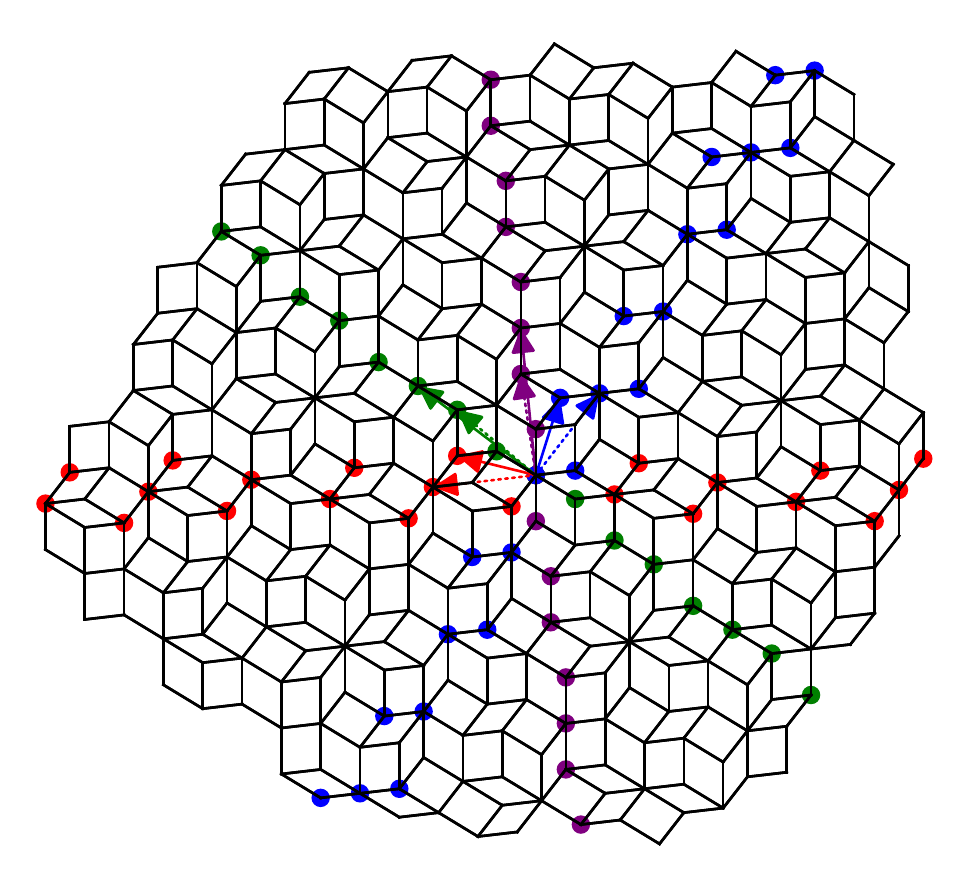}
  \label{fig:minipage1}
\end{minipage}
\quad
\begin{minipage}[b]{0.45\linewidth}
	\includegraphics[scale=0.6]{./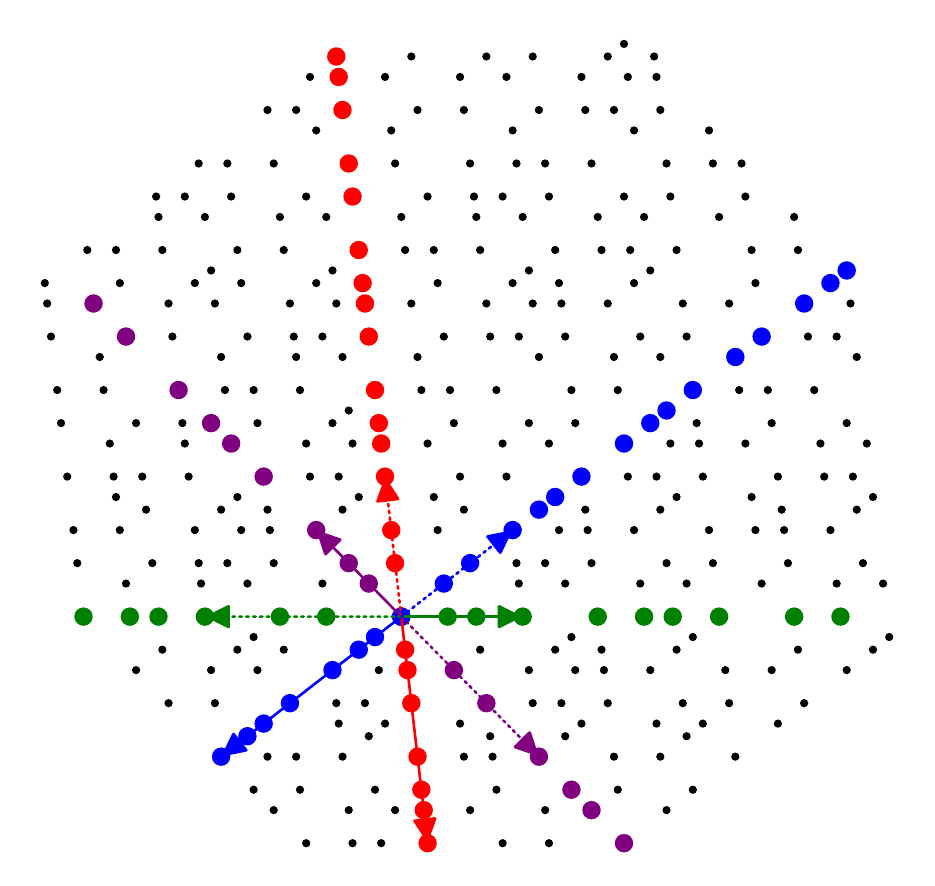}
\end{minipage}
  \caption{A pattern of Golden-Octagonal tiling is on the left and the same pattern projected to perpendicular space is on the right. Dotted arrows represent  $\lfloor q_i \rfloor$, solid arrows - $\lceil q_i \rceil$. Subperiods lines of a vertex in the center are depicted with different colors.}
  \label{fig:minipage2}
\end{figure}

\begin{definition}
	Consider a pattern $\PPP$, we say that a vertex $X \in \PPP$ \emph{sees} its $i$-th subperiod line if there exists $r>0$ such that: 
$$
	B_X(r) \cap \mathcal{P} \cap Q_i(X) \neq \emptyset.
$$
We say that a vertex $X$ sees $N$ vertices along its $i$-th subperiod if there exists $r>0$ such that:

$$
\mathbf{card} (B_X(r) \cap \mathcal{P} \cap Q_i(X)) \ge N.
$$
\end{definition}

A notable feature of the vertices of canonical cut-and-project tilings is that after the projection to the perpendicular space they are uniformly distributed in the window~\cite{Els85},\cite{Sch98}, \cite{BG13}, \cite{HKSW14}. In this paper, we use the weak version of mentioned result addressing only the distribution of vertices along subperiod lines:

\begin{lemma}[]
\label{lem:folk}
Consider a planar octagonal tiling $\T$ with at least one subperiod, a vertex $A \in \T$ and let $Q_i(A)$ be any of its subperiod lines. For any open line segment $(B,C) \subset W$ belonging to the line which contains $\proj Q_i(A)$, there exists $r>0$ inversely proportional to the length of the segment $(B,C)$, such that in $r$-neighborhood of $A$ there exists a vertex $X \in Q_i(A)$ such that $\pi^\prime X \in (B, C)$.
\end{lemma}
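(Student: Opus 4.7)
The plan is to reduce the claim to a quantitative density statement for $\Z + \Z q_i^{(i)} \subset \R$, exploiting that $q_i \in E$ forces $\proj q_i = 0$ while $q_i^{(i)}$ is irrational by definition of a subperiod. Concretely, any $y \in Q_i(A)$ has, for some $k \in \Z$, $y_j = A_j + k (q_i)_j$ for all $j \ne i$, and $y_i \in \Z$ free; writing $m := y_i - A_i - k q_i^{(i)}$ gives
\[
y = A + k q_i + m e_i, \qquad m \in \Z + \Z q_i^{(i)},
\]
and since $\proj q_i = 0$,
\[
\proj y = \proj A + m\, \proj e_i.
\]
So $\proj Q_i(A)$ sits on the affine line $\mathcal{L} := \proj A + \R \proj e_i$, parametrised linearly by $m$. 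The segment $(B,C)$ corresponds to an open interval $I \subset \R$ of length $\ell \propto \|C-B\|$, and the lemma reduces to exhibiting $n, k \in \Z$ with $n - k q_i^{(i)} \in I$ and $|k|$ controlled by $1/\ell$.

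For the quantitative step I would invoke the three-distance theorem (equivalently, Dirichlet's approximation): for every $N$, the points $\{k q_i^{(i)} \bmod 1 : 1 \le k \le N\}$ partition $[0,1)$ into arcs of at most three distinct lengths, all of size $O(1/N)$, with the implicit constant depending only on the Diophantine type of $q_i^{(i)}$ and bounded for algebraic irrationals such as $\varphi$. Choosing $N \sim 1/\ell$ then produces some $k$ with $|k| \le N$ and a matching $n \in \Z$ such that $m = n - k q_i^{(i)} \in I$. The resulting $X := A + k q_i + m e_i$ lies in $\Z^4$, satisfies $\proj X \in (B,C) \subset W$ (so automatically $X \in Q_i(A)$), and obeys
\[
\|X - A\| \;\le\; |k|\,\|q_i\| + |m|\,\|e_i\| \;=\; O(1/\ell),
\]
yielding the required inverse proportionality between $r$ and the length of $(B,C)$.

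The main technical point is the quantitative rate in the equidistribution step. Density of $\Z + \Z q_i^{(i)}$ in $\R$ is a direct consequence of the irrationality of $q_i^{(i)}$, but the explicit bound $|k| = O(1/\ell)$ requires a Diophantine input. In the Golden-Octagonal setting each $q_i^{(i)}$ is a rational function of $\varphi$, whose continued-fraction expansion is bounded, so the estimate is immediate; for a general planar tiling with an arbitrary irrational subperiod coordinate one would need a discrepancy bound for the corresponding irrational rotation, possibly at the cost of logarithmic factors.
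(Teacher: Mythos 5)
The paper never actually proves Lemma~\ref{lem:folk}: it is introduced as a ``weak version'' of the uniform-distribution property of cut-and-project sets and justified only by pointing to the cited equidistribution literature. Your argument is therefore not paralleling a proof in the paper but supplying one, and it isolates exactly the right mechanism. Since $q_i\in E$ forces $\proj q_i=0$, the set $\proj Q_i(A)$ lies on the line $\proj A+\R\,\proj e_i$ and is parametrised by $m\in\Z+\Z q_i^{(i)}$, so the lemma reduces to a quantitative density statement for the orbit of an irrational rotation; the three-distance theorem then gives gaps of size $O(1/N)$ after $N$ steps precisely when $q_i^{(i)}$ has bounded partial quotients, which holds for the Golden-Octagonal tilings (each irrational coordinate is $\pm\varphi$ up to an integer, a quadratic irrational). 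Combined with the bound $\|X-A\|\le |k|\,\|q_i\|+|m|\,\|e_i\|$, where $|k|=O(1/\ell)$ and $|m|$ is bounded by the diameter of $W$ divided by $|\proj e_i|$, this delivers the claimed inverse proportionality, and the observation that $\proj X\in(B,C)\subset W$ automatically puts $X$ in $Q_i(A)$ closes the argument. What your route buys over the paper's citation is an explicit, elementary dependence of $r$ on the segment length, which is exactly the quantitative form the later case analysis (Cases 1, 2a--2d) actually consumes.

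One point you flag deserves emphasis: the lemma as stated quantifies over \emph{all} planar octagonal tilings with a subperiod, while the ``inversely proportional'' rate genuinely requires a Diophantine hypothesis on $q_i^{(i)}$. For a slope whose irrational coordinate is Liouville, the largest gap after $N$ rotation steps is not $O(1/N)$ along a subsequence, so the stated rate fails even though plain density (hence a qualitative version of the lemma) survives. This is a defect of the lemma's statement rather than of your proof; for the Golden-Octagonal tilings, the only case the paper uses, your argument is complete.
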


\subsection{Coincidences of subperiods} 

The following lemma highlights the interconnection between subperiods of different types in Golden-Octagonal tilings. We note that the lemma can be generalized to all the tilings with local rules. However, in this paper, we only prove it for Golden-Octagonal tilings:

\begin{lemma}[]
	\label{lem:third_intersection}
For Golden-Octagonal tiling, given $R\in \N$, there exists $l\in \N$, such that for any two integer open line segments $(A, A+e_i)$, $(B, B + e_j), i\neq j$, with the following properties:  their orthogonal projections to perpendicular space intersect at some point $X \in W$ and $\|A-B\|_1 < R$, there is another open line segment $(C, C+e_k)$, $k \neq i,j$, whose projection intersect both of the first two intervals in $X$ and $\|A-C\|_1 < lR$. Moreover, each of the three intervals has an endpoint that projects into $W$.
\end{lemma}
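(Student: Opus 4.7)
The plan is to reduce the geometric assertion to a piece of linear algebra over $\Z[\varphi]$ using the explicit structure of the four vectors $\proj e_1,\dots,\proj e_4$ in $E^\perp$. Taking $(b,c):=(\proj e_2,\proj e_3)$ as a basis of $E^\perp$ and using that each subperiod $q_m$ lies in $E$, so $\proj q_m=0$, the four subperiod identities rewrite as
\[
\proj e_1 = -\varphi\, b - c,\qquad \proj e_2 = b,\qquad \proj e_3 = c,\qquad \proj e_4 = -b - \varphi\, c.
\]
Hence $\proj\Z^4 = \Z[\varphi]^2$ in this basis, and each of the six minors $\det(\proj e_i,\proj e_j)$, $i\neq j$, is a unit of $\Z[\varphi]$. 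From the crossing identity $\proj(B-A)=t\,\proj e_i - s\,\proj e_j$ with $\proj(B-A)\in\Z[\varphi]^2$, Cramer's rule then forces $(t,s)\in(\Z[\varphi]\cap(0,1))^2$. Writing $t=p'+q'\varphi$ with $p',q'\in\Z$, the explicit expression of $\proj(B-A)$ in the standard basis of $\Z^4$ yields $|p'|,|q'|\le\|A-B\|_1<R$.

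The second step is the explicit construction of $C$. For any $k\in\{1,2,3,4\}\setminus\{i,j\}$ and any $u\in\Z[\varphi]\cap(0,1)$, the vector $t\,\proj e_i - u\,\proj e_k$ again lies in $\Z[\varphi]^2=\proj\Z^4$; using that $\proj|_{\Z^4}$ is injective (since $E\cap\Z^4=\{0\}$ for the Golden-Octagonal slope), it is the image of a unique $D\in\Z^4$, and $C:=A+D$ automatically satisfies $\proj C + u\,\proj e_k = \proj A + t\,\proj e_i = X$, so $(C,C+e_k)$ crosses $X$ in its projection. Fixing for concreteness $u=\varphi-1$ and solving the resulting $2\times 2$ system coordinate-by-coordinate in each of the six cases for $\{i,j\}$ and both admissible $k$, I would obtain $\|D\|_1\le c(|p'|+|q'|+1)$ for an absolute constant $c$, whence $\|C-A\|_1<lR$ for some universal $l$.

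For the \emph{moreover} clause I would exploit the zonotope structure $W=\{\sum_m \tau_m\,\proj e_m : \tau_m\in[0,1]\}$. If $X\in W$ admits a representation $X=\sum_m\tau_m\,\proj e_m$ with $\tau_m\in[0,1]$, then $X-\tau_m\,\proj e_m$ and $X+(1-\tau_m)\,\proj e_m$ both lie in $W$, so the chord of $W$ through $X$ in direction $\proj e_m$ has length at least $\|\proj e_m\|$. An elementary interval argument then shows that for every $t\in(0,1)$ at least one of $X - t\,\proj e_m$, $X+(1-t)\,\proj e_m$ lies in $W$; applied in turn with $m\in\{i,j,k\}$, this delivers the endpoint-in-$W$ condition for each of the three intervals.

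I expect the main difficulty to be the uniformity of the construction rather than any single hard step. The $\Z[\varphi]$-arithmetic must be carried out coherently across all six unordered pairs $\{i,j\}$ and both admissible values of $k$, producing one constant $l$ that works simultaneously; this is mechanical but demands careful bookkeeping. A secondary concern is the boundary case $X\in\partial W$, where the chord argument can degenerate: this is handled either by arguing that for generic Golden-Octagonal configurations two projected integer edges cannot cross exactly on $\partial W$, or by replacing $u=\varphi-1$ by another element of $\Z[\varphi]\cap(0,1)$, which only inflates $l$ by a bounded factor.
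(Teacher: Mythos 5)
Your proof is correct, and it reaches the conclusion by a genuinely more structural route than the paper's. The paper fixes the representative case $\{i,j,k\}=\{1,2,3\}$, writes the conditions $Y-X\in E$ and $Z-X\in E$ as explicit linear systems in the spanning vectors $u,v$ of the slope, solves for the coordinates of $C$ (finding one free integer parameter $c_2$, taken smaller than $R$ to get $\|A-C\|_1<8R$), declares the other index triples ``similar,'' and disposes of the \emph{moreover} clause in one sentence by noting that each projected interval is a translate of an edge of $W$. You instead work entirely in $E^\perp$: the identification $\proj\Z^4=\Z[\varphi]^2$ in the basis $(\proj e_2,\proj e_3)$, together with the fact that all six minors $\det(\proj e_i,\proj e_j)$ are units of $\Z[\varphi]$, lets Cramer's rule treat every pair $\{i,j\}$ at once, and the existence of $C$ becomes surjectivity of $\proj|_{\Z^4}$ onto $\Z[\varphi]^2$ rather than an ad hoc resolution of a $4$-dimensional system (your one-parameter freedom in $u\in\Z[\varphi]\cap(0,1)$ is the same degree of freedom as the paper's free $c_2$, which slides $Z$ along a subperiod line). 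What your version buys is uniformity over the index triples and, more importantly, an actual proof of the \emph{moreover} clause: the zonotope chord argument (the section of $W$ through $X$ in direction $\proj e_m$ has length at least $|\proj e_m|$, so at least one endpoint of any unit translate through $X$ stays in $W$) is precisely the justification the paper's one-liner is missing, since the assertion fails for a general convex polygon and genuinely uses that $W$ is the Minkowski sum of the segments $[0,\proj e_m]$. Two minor quibbles: the bound $|p'|,|q'|\le\|A-B\|_1$ can acquire a small absolute factor when the Cramer denominator is $\pm\varphi$ (since $1/\varphi=\varphi-1$ mixes the two $\Z$-components), which is harmless for the existence of $l$; and your concern about $X\in\partial W$ is unnecessary, as the chord argument nowhere uses interiority.
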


\begin{proof}
	Let $\{i,j,k\} = \{1,2,3\}$, without loss of generality let 
$$
A = (0,0,0,0), \quad B = (b_1,b_2,b_3,b_4), \quad b_1,b_2,b_3,b_4 \in \Z.
$$
By our assumption, there are points $X\in (A,A+e_1)$ and $Y\in(B,B+E_2)$ written as:

$$
X = (x, 0,0,0), \quad Y = (b_1, y, b_3, b_4), \quad x,y \in \R \setminus \Z,
$$
such that $\proj X = \proj Y = X$. Therefore, $Y - X \in E$ and:

\begin{align*}
	b_1 - x &=  - \lambda   
	\\
	y  &=  \mu 
	\\
	b_3 &=  \lambda \varphi + \mu \varphi 
	\\
	b_4 &=  \lambda \varphi + \mu 
\end{align*}
The last two equations yield:

\begin{align*}
	\lambda &= - b_3 + b_4 \varphi  \\
	\mu &= (b_3 - b_4) \varphi
\end{align*}
The first two equations then yield:
\begin{align*}
	x &= b_1 - b_3 + b_4 \varphi  \\
	y &= (b_3 - b_4) \varphi
\end{align*}
Therefore, $X$ and $Y$ can be written as
$$
X = (b_1 - b_3 + b_4 \varphi, 0, 0, 0), \quad Y = (b_1, (b_3 - b_4)\varphi, b_3, b_4).
$$
Now we search for the third intersection, let
$$
C = (c_1, c_2, c_3, c_4),\quad c_1,\dots,c_4 \in \Z,
$$
and
$$
 ]C, C+e_3[\ni Z = (c_1, c_2, z, c_4), \quad z\in\R\setminus\Z.
$$
Again, since we demand that $\proj Z = X$, we must have $Z - X \in E$: 
\begin{align*}
	c_1 - x &=  - \lambda   
	\\
	c_2  &=  \mu 
	\\
	z &=  \lambda \varphi + \mu \varphi 
	\\
	c_4 &=  \lambda \varphi + \mu 
\end{align*}
The second and fourth equations yield: 
\begin{align*}
	\lambda &=  (\varphi - 1)(c_4 -c_2)  \\
	\mu &= (b_3 - b_4) \varphi
\end{align*}
The first and third ones then yield:
\begin{align*}
	x &= c_1 +(\varphi -1)(c_4 - c_2)  \\
	z &= -c_2 + c_4 + c_2 \varphi
\end{align*}
Two expressions obtained for $x$ must be equal, that is: 
$$
c_4 - c_2 = b_4
$$
$$
c_1 - c_4 + c_2 = b_1 - b_3
$$
Consequently, the third intersection point $Z$ is written as: 
$$
Z = (b_1 - b_3 + b_4, c_2, c_2\varphi - b_4, b_4 + c_2),
$$
where $c_2$ can be chosen freely. Taking $c_2 < R$, since $\|A - B \|_1 < R$, it immediately follows that $\|C -A\|_1 < 8R$. 
Also, because the projection of each interval is a translation of an edge of $W$ and $X$ is in $W$, it follows that at least one endpoint of each interval also projects into $W$. Cases for other triples $\{i,j,k\}$ are treated in a similar way.
\end{proof}

\subsection{Shape of $r$-patterns} 

Consider an $n$-pattern $\PPP$ of a Golden-Octagonal tiling centered at a vertex $c \in \Z^4$:
\[
	\PPP =	\{ x\in \Z^4 \ \ | \ \ \| x-c \|_1 < n, \ \pi^\prime x \in W \}.
\]
To describe the geometrical shape of $\pi(\mathcal{P})$ we compute a section of a $3$-sphere of radius $r$ in $L_{1}$ distance with the generating plane of a Golden-Octagonal tiling. For example, one of the intersections of two of the $3$-dimensional faces of the sphere and the plane can be written as: 

\begin{align*}
\begin{cases}
x_1 + x_2 + x_3 + x_4 = r \\
- x_1 + x_2 + x_3 + x_4 = r \\
	(x_1, x_2, x_3, x_4)^T = \lambda u + \mu v
\end{cases}
\end{align*}
Solving the system we obtain 
\begin{align*}
	x = \frac{r}{2+\varphi} \begin{pmatrix}
           0 \\
           1 \\
           \varphi \\
           1
   \end{pmatrix},
\end{align*}
note that $x$, the coordinate of one of the vertices of the section, is proportional to $q_3$, the third subperiod. Calculating the other vertices of the section we obtain the following:

\begin{proposition}
\label{prop:shape}
There exists a set of constants $\{c_i \in \R^+\}$ for $ i = 1,\dots, 4$, such that for any $r\in\N$, vertices of an $r$-pattern of a Golden-Octagonal tiling centered in origin, when projected to the physical space, are within uniformly bounded distance from an octagon defined as a convex hull of  $~\pm c_i r \proj \vec q_i$, where $q_i$ is the $i$-th subperiod.
\end{proposition}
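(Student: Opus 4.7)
The plan is to first identify the octagon as the section of the $L_1$-ball of radius $r$ with the slope $E$, and then bound how far the $\pi$-projections of lattice points in the $r$-pattern can stray from this section.

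\medskip

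For the shape identification, I would extend the computation already started in the excerpt. The set $K_r := E \cap \{x \in \R^4 : \|x\|_1 \le r\}$ is a centrally symmetric convex polygon; its vertices arise as intersections of the 2-plane $E$ with the triangular 2-faces of the cross-polytope, each such 2-face being specified by fixing one coordinate $x_i = 0$ together with a sign pattern $(\epsilon_j)_{j\ne i}$ and the equation $\sum_{j\ne i}\epsilon_j x_j = r$. Parametrizing $x = \lambda u + \mu v \in E$ and solving the resulting linear system gives a candidate point; it is a genuine vertex of $K_r$ precisely when the non-negativity conditions $\epsilon_j x_j \ge 0$ all hold. Running through all thirty-two such combinations, I expect exactly eight to pass the test, producing the vertices $\pm c_i r\, q_i$ for $i = 1,\dots,4$ with $c_i = 1/\|q_i\|_1$. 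The sample computation in the excerpt handles the case $(i,\epsilon_2,\epsilon_3,\epsilon_4)=(1,+,+,+)$ and indeed returns $r q_3/(2+\varphi)$; since each subperiod lies in $E$, we have $\pi q_i = q_i$, matching the form stated in the proposition.

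\medskip

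For the deviation bound, let $x$ be any vertex of the $r$-pattern centered at the origin, so $\|x\|_1 < r$ and $\pi^\prime x \in W$. Writing $x = \pi x + \pi^\prime x$ and using that $W$ is bounded (say $\|\pi^\prime x\|_1 \le M$ for every admissible $x$, where $M$ depends only on $W$), the triangle inequality gives $\|\pi x\|_1 \le \|x\|_1 + \|\pi^\prime x\|_1 < r + M$. Since $\pi x \in E$, this forces $\pi x \in K_{r+M}$, the homothetic image of $K_r$ from the origin by the factor $(r+M)/r$. The Hausdorff distance between $K_r$ and $K_{r+M}$ is then at most $M \cdot \max_i c_i \|q_i\|$, a constant independent of $r$, so each $\pi x$ lies within constant distance of $K_r$.

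\medskip

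The main obstacle is the case analysis in the first step: one must verify that \emph{exactly} eight of the thirty-two candidate 2-faces yield valid vertices of $K_r$, and that these are precisely of the claimed form. Central symmetry of the cross-polytope pairs the cases, cutting the count in half; the remaining sixteen checks are elementary arithmetic but must be tracked carefully to confirm both that no spurious vertex arises (from a sign pattern that accidentally gives a point with the wrong octant) and that no claimed vertex is missed. The fact that the eight surviving vertices are in one-to-one correspondence with the four subperiods $\pm q_i$ is a special feature of the Golden-Octagonal slope, reflecting the way each $q_i$ has one integer coordinate equal to zero.
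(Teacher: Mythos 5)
Your proposal is correct and follows essentially the same route as the paper: the paper's argument is precisely the computation of the section of the $L_1$-ball of radius $r$ with the slope $E$, identifying its vertices with the points $\pm r\,q_i/\|q_i\|_1$ (the sample system in the text is your case $x_1=0$, $x_2+x_3+x_4=r$, yielding $r\,q_3/(2+\varphi)$). Your explicit deviation bound via $\|\pi x\|_1\le\|x\|_1+\|\pi^\prime x\|_1<r+M$ and the homothety $K_{r+M}=\tfrac{r+M}{r}K_r$ is a welcome addition, since the paper leaves the ``uniformly bounded distance'' part implicit.
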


\subsection{Forcing vertices}

The last lemma in the section addresses the idea of how two vertices in a tilings can force the third one.

\begin{lemma}
\label{lem:cherez_odnu}
Consider a Golden-Octagonal tiling with a window $W$ and three points $B,A,C \in \Z^4$, such that there exists a translation of the slope such that $\proj B,\proj A$, and $\proj C$ belong to consecutive (clockwise or anticlockwise) edges of the translated window $W^\star$. Then, if $\proj B, \proj C \in W$, it follows that $\proj A \in W$.
\end{lemma}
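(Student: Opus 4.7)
The plan is to analyse how $\partial W^\star$ sits relative to $W$ using elementary convex geometry.

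The window $W$ of a Golden-Octagonal tiling is the projection of the unit hypercube to internal space, which is the centrally symmetric convex octagon (zonotope) generated by the four vectors $\proj e_1,\ldots,\proj e_4$; its translate $W^\star = W + t$ (for some $t \in E^\perp$) is a congruent octagon. By a standard support-function computation, for two translates of a convex body the convex set $W \cap W^\star$ has boundary composed of one arc from $\partial W$ (covering outer-normal directions $u$ with $\langle t,u\rangle \ge 0$) and one arc from $\partial W^\star$ (covering $\langle t,u\rangle \le 0$). Consequently $\partial W \cap \partial W^\star$ consists (generically) of exactly two points $T_1, T_2$, namely the points of $\partial W^\star$ whose outer normals are the two directions perpendicular to $t$. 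In particular the outer normals at $T_1, T_2$ differ by $180^\circ$, so the two points sit on opposite parallel edges of $W^\star$.

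These two points cut $\partial W^\star$ into an ``inside'' arc contained in $W$ and an ``outside'' arc disjoint from the interior of $W$. The key geometric bound I require is that three consecutive edges of $W^\star$ cannot contain both $T_1$ and $T_2$; equivalently, their outer normals span strictly less than $180^\circ$. I would derive this from central symmetry of the zonotope: its eight exterior angles come in four antipodal pairs $\alpha_1,\alpha_2,\alpha_3,\alpha_4$ (each value occurring twice), and since all eight sum to $360^\circ$ one has $\alpha_1+\alpha_2+\alpha_3+\alpha_4 = 180^\circ$. The outer-normal span of three consecutive edges equals the sum of the two exterior angles at the vertices between them --- a sum of two consecutive $\alpha_i$ --- and each $\alpha_i$ is strictly positive because the four generator directions are pairwise non-parallel, so this sum is strictly less than $180^\circ$.

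The argument now closes: by hypothesis $\proj B, \proj C \in W$ both lie on $\partial W^\star$, so both are on the inside arc. The sub-arc of $\partial W^\star$ running from $\proj B$ through $\proj A$ to $\proj C$ is contained in the three consecutive edges $e_B, e_A, e_C$, and therefore contains at most one of the transition points $T_1, T_2$. If it contained exactly one, travelling along it from $\proj B$ to $\proj C$ would flip inside/outside exactly once, contradicting that both endpoints lie on the inside arc; hence it contains no transition at all, lies entirely in the inside arc, and in particular $\proj A \in W$.

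The main obstacle I anticipate is rigorously handling the degenerate configurations --- $t$ parallel to some $\proj e_i$ (so $\partial W \cap \partial W^\star$ may contain a whole segment rather than two points) and transition points coinciding with vertices of $W^\star$. These should be dispatched either by perturbing $t$ (the conclusion ``$\proj A \in W$'' is closed in the data) or by a direct check that the inside arc remains a connected subarc of $\partial W^\star$ in the degenerate cases as well.
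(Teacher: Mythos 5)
Your overall strategy---show that the set of points of $\partial W^\star$ lying in $W$ is a single arc, and then count inside/outside transitions along the path from $\proj B$ through $\proj A$ to $\proj C$---is sound, and it is considerably more explicit than the paper's own proof, which simply declares the statement trivial for a convex window without right angles and points to a figure. However, your identification of the transition points is incorrect, and the ``key geometric bound'' built on it fails. For two translates $W$ and $W^\star=W+t$ of a convex body, what is true is only a one-sided statement: a point $y\in\partial W^\star$ with outer normal $u$ can lie in $W$ only if $\langle t,u\rangle\le 0$ (compare support functions: $\langle y,u\rangle=h_W(u)+\langle t,u\rangle\le h_W(u)$). This gives a closed half-turn of normal directions that \emph{contains} the inside arc, but the endpoints of the inside arc, i.e.\ the points of $\partial W\cap\partial W^\star$, are in general strictly interior to that half-turn, not at its extremes. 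Two unit disks centered at distance $3/2$ already refute your claim: the two boundary intersection points have outer normals differing by roughly $83^\circ$, not $180^\circ$. For the octagon itself, taking $|t|$ close to the diameter places both transition points on a \emph{single} edge of $W^\star$. So three consecutive edges can perfectly well contain both transition points, the claimed $180^\circ$ separation is false, and the contradiction you invoke evaporates. (Your support-function description of $\partial(W\cap W^\star)$ is also not quite right---the support function of an intersection is not the minimum of the support functions---though the connectedness of the inside arc that you need can be recovered, e.g.\ from concavity of the length of chords of $W$ in direction $t$, which shows $\partial W\cap\partial W^\star$ generically consists of exactly two points.)

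The skeleton of your argument can be repaired by replacing the false bound with a global one. Suppose the path $\gamma$ from $\proj B$ through $\proj A$ to $\proj C$ along the three consecutive edges contained both transition points; then the complementary path $\gamma'$, which runs through the five remaining edges of $W^\star$ in their entirety, contains none, and since its endpoints lie in $W$ it lies entirely in $W$. Full containment of an edge with outer normal $\nu$ forces $\langle t,\nu\rangle\le 0$, and $\proj B,\proj C\in W$ give $\langle t,\nu_1\rangle\le 0$ and $\langle t,\nu_3\rangle\le 0$ for the two outer edges of the triple. By central symmetry the five remaining normals include $-\nu_2$ and an antipodal pair $\pm\nu_4$, and by your (correct) exterior-angle computation---which is exactly where the absence of right angles enters---$\nu_2$ is a strictly positive combination of $\nu_1$ and $\nu_3$. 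Together these inequalities force $\langle t,\nu\rangle=0$ for a spanning set of normals, hence $t=0$, in which case the lemma is vacuous. That contradiction, rather than a local constraint on where the transition points may sit, is what closes the argument.
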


\begin{proof}
	Trivial in the case of Golden-Octagonal tilings since the window is convex polygon without right angles, see Figure~\ref{fig:three_cons}.
\end{proof}

\begin{figure}[htb]
	\center{\includegraphics[width=0.75\textwidth]
	{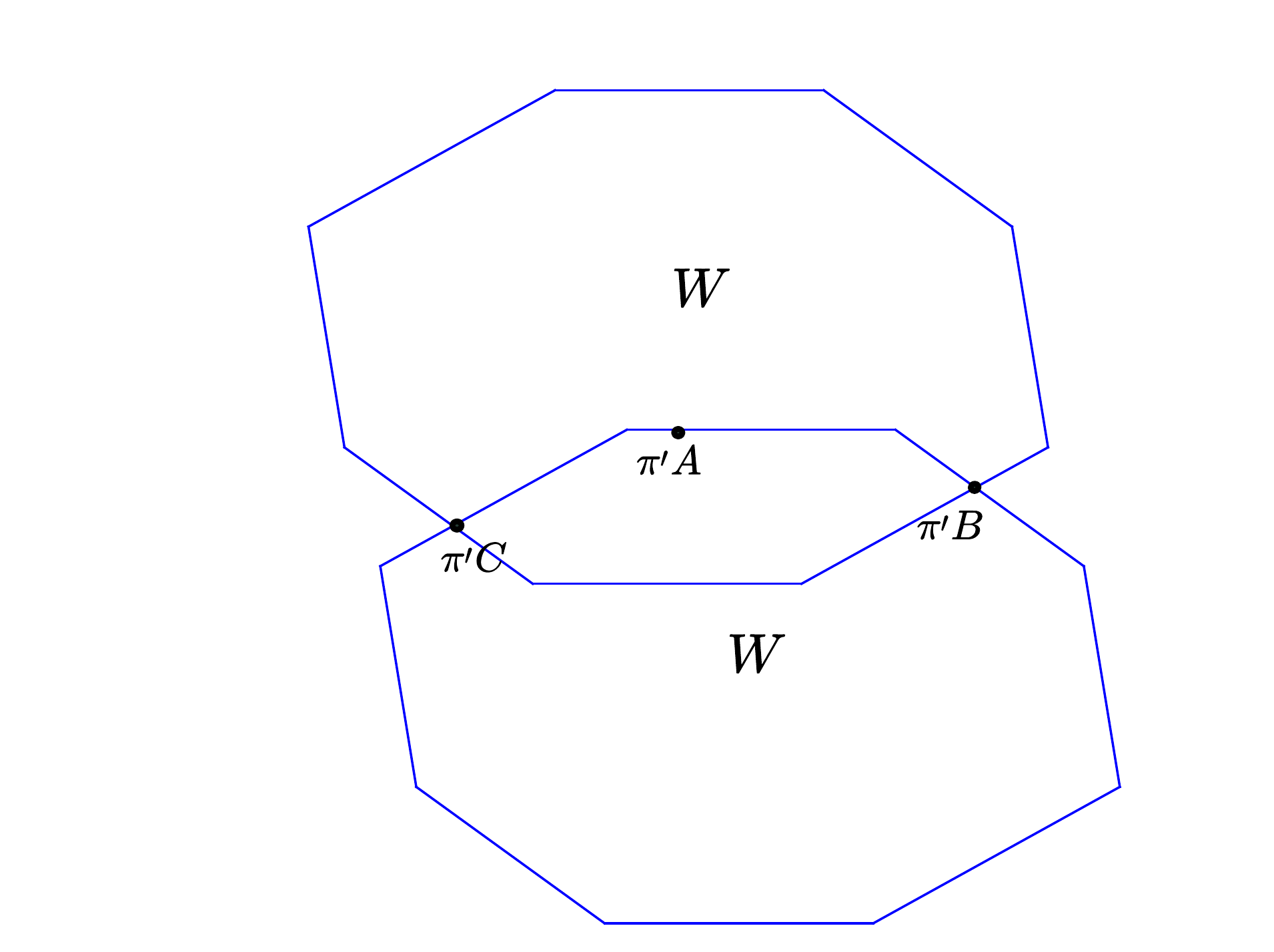}}
	\caption{\label{fig:three_cons} We have $\proj B, \proj A$, and $\proj C$ belonging to three consecutive edges of a translated window $W$. Note that there is no translation of the window $W$ such that $\proj B, \proj C \in W$ but $\proj A \notin W$.}
\end{figure}

\section{Proof of Theorem~\ref{thm:main}}

To state the main Lemma, we need to define so-called $\emph{extreme worms}$, the vertices of a pattern that are closest to the boundary of the window after the projection:

\begin{definition}
Consider a patch $\PPP$ of a planar octagonal tiling with slope $E$ and window $W$.
For $\alpha\geq 0$, we denote by $\PPP_\alpha$ the vertices of $\PPP$ which correspond to points at distance at least $\alpha$ from the boundary of $W(\PPP)$:
$$
\PPP_\alpha:=\{x\in\PPP~|~d(\pi^\prime x,\partial W(\PPP))\geq\alpha\},
$$
and respectively
$$
[\PPP]_\alpha:=\{x\in[\PPP]~|~d(\pi^\prime x,\partial W(\PPP))\geq\alpha\}.
$$
\end{definition}
We have $[\PPP]_0=[\PPP]$, $[\PPP]_\alpha = [\PPP_\alpha]$, and $[\PPP]_\alpha\subset [\PPP]_\beta$ for $\alpha>\beta$.

\begin{definition}
Consider a patch $\PPP$ of a planar octagonal tiling with slope $E$ and window $W$.
For $\alpha\geq 0$, we denote by $Ext_i^\alpha(\PPP)$ the vertices of $[\PPP]$ whose orthogonal projection on $E^\bot$ belongs to the $i$-th edge of $W(\PPP)$ and is at distance at least $\alpha$ from each line which contains another edge of $W(\PPP)$. Also, we denote  $\cup_{i=0}^8 Ext_i^\alpha([\PPP])$ as $Ext^\alpha([\PPP])$. We call such vertices \emph{extreme worms of $\SSS$}.
\end{definition}

\begin{lemma}
	For Golden-Octagonal tilings, for any real number $\alpha > 0 $, there is a seed radius $m \in N$ and growth raduis $r\in N$, such that the self-assembly algorithm provided with an $n$-pattern $\SSS$ centered at the origin, where $n\ge m$, as a seed, and an $r$-atlas, builds at step $h$ at least  $([\SSS]_\alpha\cup Ext^{\alpha}([\SSS])\cap B(0, h)$.
	\label{lem:main}
\end{lemma}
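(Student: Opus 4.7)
The plan is to prove Lemma~\ref{lem:main} by induction on the step number $h$, showing that at step $h$ the algorithm has forced every vertex of $([\SSS]_\alpha \cup Ext^\alpha([\SSS])) \cap B(0,h)$. For the base case, the seed radius $m$ must be chosen large enough that $W(\SSS)$ is already a good approximation to the true window $W$ (recall that $W(\PPP)$ grows monotonically toward $W$ as $\PPP$ grows), so that the strata $[\SSS]_\alpha$ and $Ext^\alpha([\SSS])$ are well-defined and close to their ideal counterparts. I would then fix the growth radius $r$ large enough that: (i)~the $r$-atlas of a Golden-Octagonal tiling forces the correct slope, possible since these tilings admit local rules; (ii)~by Lemma~\ref{lem:folk}, each vertex sees several hits along its subperiod lines inside its $r$-neighborhood; (iii)~by Lemma~\ref{lem:third_intersection}, any pair of short subperiod intervals whose projections cross admit a third crossing interval within distance $r$; (iv)~by Proposition~\ref{prop:shape}, the $r$-pattern around any vertex reaches a scaled octagonal boundary along all four subperiod directions.

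For the inductive step on an interior vertex $y \in [\SSS]_\alpha \cap B(0, h+1)$, I would locate a witness vertex $x \in \SSS_h$ within distance $r$ of $y$ lying on a subperiod line through $y$; such an $x$ exists by the inductive hypothesis together with the isotropic reach of Proposition~\ref{prop:shape}. Any pattern in the $r$-atlas matching $\mathcal{V}_x$ must correspond to a tiling of the correct slope whose window is a translation of $W$ consistent with the projections of the already-known seed vertices near $x$. Since $\pi' y$ lies at distance at least $\alpha$ from $\partial W(\SSS)$, it remains inside every such admissible translated window, so $y$ appears in every compatible pattern, hence $y \in \mathcal{F}_x$ and therefore $y \in \SSS_{h+1}$.

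For an extreme worm $y \in Ext_i^\alpha([\SSS]) \cap B(0, h+1)$, the projection $\pi' y$ lies on the $i$-th edge of $W(\SSS)$, so the transverse-window argument above only half-applies. Instead I would invoke Lemma~\ref{lem:cherez_odnu}: I select two already-forced extreme worms $B, C \in \SSS_h$ whose projections lie on the two edges of $W(\SSS)$ adjacent to the $i$-th. Using Lemma~\ref{lem:third_intersection} with subperiods of the two types different from $i$, I can arrange that the relevant subperiod intervals cross the $i$-th edge exactly at $\pi' y$; Lemma~\ref{lem:folk} guarantees these witnesses appear within distance $r$ of a suitable $x \in \SSS_h$. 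Any pattern in the atlas matching $\mathcal{V}_x$ must then come from a window translation in which $\pi' B$ and $\pi' C$ both lie, and Lemma~\ref{lem:cherez_odnu} forces $\pi' y$ to lie in it as well, so $y$ is forced.

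The delicate part is the extreme-worm case: along the $i$-th edge the translation of the window is only constrained transversely, so one must combine the three preliminary lemmas quantitatively tightly to bring all the forcing witnesses within a single $r$-ball around a vertex that is itself already in $\SSS_h$. The $\alpha$-separation hypothesis is essential and cannot be weakened to $\alpha = 0$: near a corner of $W$, no finite local neighborhood can pin the two independent window-translation directions simultaneously, so corner vertices are genuinely undetermined by any local rule, which is why the statement restricts to $Ext^\alpha$ rather than $Ext^0$. Tracking the constants from Lemmas~\ref{lem:folk} and~\ref{lem:third_intersection} should produce the quantitative dependence of $m$ and $r$ on $\alpha$ claimed in the lemma.
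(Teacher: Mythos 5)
Your skeleton (induction on the step, the four preliminary statements as the toolbox, Lemma~\ref{lem:cherez_odnu} plus Lemma~\ref{lem:third_intersection} for vertices projecting onto an edge of the window) matches the paper, but there are two genuine gaps. First, your interior case does not work as stated. Exhibiting one witness $x\in\SSS_h$ on a subperiod line through $y$ and then arguing that ``$\pi^\prime y$ is at distance $\ge\alpha$ from $\partial W(\SSS)$, hence lies in every admissible translated window'' conflates $W(\SSS)$ with $W(\mathcal{V}_x)$: the patterns of the atlas compatible with $\mathcal{V}_x$ correspond to all translates of $W$ containing $\pi^\prime\mathcal{V}_x$ only, and $\alpha$-depth in $W(\SSS)$ gives no control over that larger family. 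A single subperiod line pins the window translation in only one direction. The paper's Case~1 instead produces \emph{two} witnesses $B\in Q_i(A)$, $C\in Q_j(A)$ (via Lemma~\ref{lem:folk}) whose perpendicular projections sit near two edges of the window meeting at the angle $\omega_{ij}$ containing the sector of $\pi^\prime A$, so that no translate of $W$ can contain $\pi^\prime B$ and $\pi^\prime C$ without containing $\pi^\prime A$. You need this two-witness angle argument even for vertices deep inside the window.

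Second, your case split (interior of the window versus extreme worms) misses the case that carries most of the difficulty. By Proposition~\ref{prop:shape} the grown region is, in physical space, an octagon whose corners point in the directions $\pm c_i\,\pi\vec q_i$; a vertex of $\HH_{k+1}\setminus\HH_k$ near the corner directed by $\pm q_i$ cannot see its $i$-th subperiod line at all, no matter how large $r$ is. For such vertices --- including ones whose projection is $\alpha$-deep in the window --- the second witness at the correct angle is simply absent from $\SSS_h$ and must be manufactured: the paper does this (Cases 2b--2c) by intersecting $\pi^\prime(A,A+e_3)$ with an interval through an already-built extreme worm, invoking Lemma~\ref{lem:third_intersection} to get a third crossing interval, stepping to $C_2=C_1-e_3$ and then down its subperiod line into $\HH_k$, and finally closing with Lemma~\ref{lem:cherez_odnu}; Case 2c even requires forcing an intermediate vertex $B$ first and only then forcing $A$ through it, i.e.\ a two-stage application of Lemma~\ref{lem:folk}. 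The constraints~(\ref{eq:constraints_alpha}) on the seed radius $m$ exist precisely to make these interval crossings land where they must. Your proposal, which reserves Lemmas~\ref{lem:third_intersection} and~\ref{lem:cherez_odnu} for the extreme worms alone and treats everything else as the ``easy transverse-window'' case, has no mechanism for these corner vertices, and also relies on an ``isotropic reach'' of $r$-patterns that Proposition~\ref{prop:shape} in fact contradicts --- the anisotropy of the octagonal patch is exactly what creates the hard case.
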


\begin{proof}
	Given an $\alpha>0$, consider an $m$-pattern $\SSS$ such that $Ext_i^\alpha([\SSS]) \cap \SSS \neq \emptyset$ for $i=1,\dots,8$, which satisfy the following constraints:

\begin{equation}
\begin{aligned}
	&\frac{\alpha}{\sin\omega_{81}} + | \partial_1 W(\SSS_\alpha) | > |\pi^\prime e_1|,
	&\frac{\alpha}{\sin\omega_{23}} + | \partial_2 W(\SSS_\alpha) | > |\pi^\prime e_2|, \\
	&\frac{\alpha}{\sin\omega_{23}} + | \partial_3 W(\SSS_\alpha) | > |\pi^\prime e_3|,
	&\frac{\alpha}{\sin\omega_{45}} + | \partial_4 W(\SSS_\alpha) | > |\pi^\prime e_4|, \\
	&\frac{\alpha}{\sin\omega_{45}} + | \partial_5 W(\SSS_\alpha) | > |\pi^\prime e_1|, 
	&\frac{\alpha}{\sin\omega_{67}} + | \partial_6 W(\SSS_\alpha) | > |\pi^\prime e_2|, \\
	&\frac{\alpha}{\sin\omega_{67}} + | \partial_7 W(\SSS_\alpha) | > |\pi^\prime e_3|, 
	&\frac{\alpha}{\sin\omega_{81}} + | \partial_8 W(\SSS_\alpha) | > |\pi^\prime e_4|,
\end{aligned}
\label{eq:constraints_alpha}
\end{equation}
where $\partial_i W(\SSS_\alpha)$ is the $i$-the edge of $W(\SSS_\alpha)$ numerated clockwise starting from the leftmost edge, and $\omega_{ij}$ is the angle between $\partial_i W(\SSS_\alpha)$ and $\partial_j W(\SSS_\alpha)$, see Figure~\ref{fig:window}. Note that for any $n$-pattern, with $n > m$, the constraints above hold as well.

\begin{figure}[htb]
	\center{\includegraphics[width=0.75\textwidth]
	{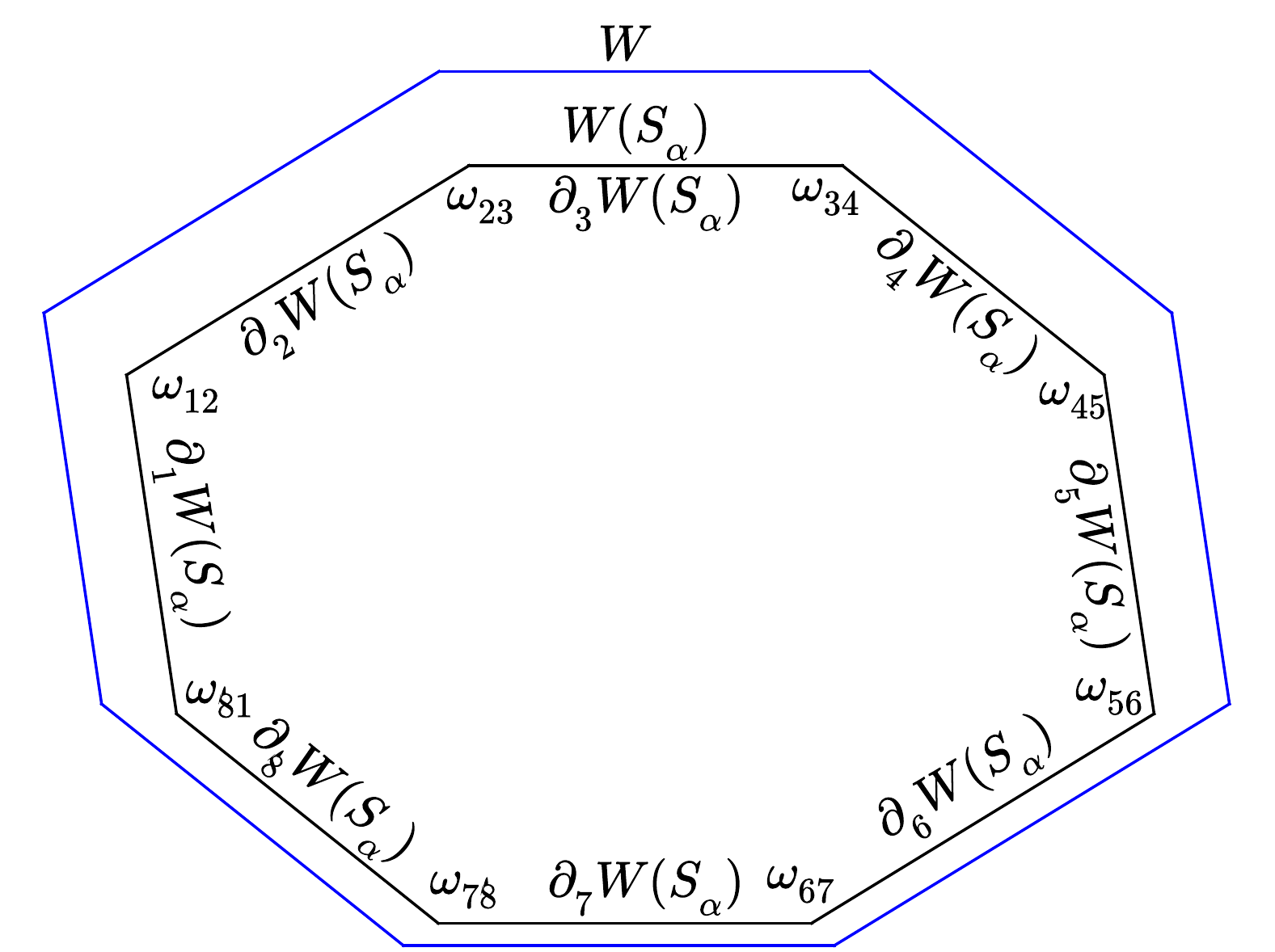}}
	\caption{\label{fig:window} Window $W$ is depicted in blue, the exact position of $W$ we do not know. Window $W(\SSS_\alpha)$ is depicted in black, with edges numerated clockwise starting from the leftmost one, $\omega_{ij}$ denotes the angle between edges $i$ and $j$.}
\end{figure}

\textbf{Base of induction.} Since $\SSS$ is an $m$-pattern, the first nontrivial step of the algorithm is with $k=nm1$. Let the growth radius $r$ be such that $B_X(r)\cap \SSS = \SSS$ for all $X\in\SSS$. Consequently, all the vertices in $\HH_{m+1}$ are forced. 

\textbf{Step of induction.} We need to show that given vertices at step $k$, all the new vertices at step $k+1$ are forced:
\[
	\HH_k := \left( [\SSS]_\alpha \cup Ext^\alpha([\SSS]) \right) \cap B_0(k) \to ( [\SSS]_\alpha \cup Ext^\alpha([\SSS]) ) \cap B_0(k+1) =: \HH_{k+1}.
\]
It is easy to see using Proposition~\ref{prop:shape} that there are no vertices in $\HH_{k+1}\setminus \HH_k$ that see less than three subperiod lines and that all the vertices which see exactly three subperiod lines are necessarily projected near the corners, see Figure~\ref{fig:shape}. We divide $\HH_{k+1}\setminus \HH_k$ into two sets: vertices which see at least $N(\alpha)$ vertices along each of its subperiod lines and, therefore, when projected on $E$ are relatively far away from corners of the growing pattern, and vertices which see $N(\alpha)$ vertices only along three of its subperiod lines, respectively whose vertices land near corners when projected on $E$.

\begin{figure}[htb]
	\center{\includegraphics[width=0.75\textwidth]
	{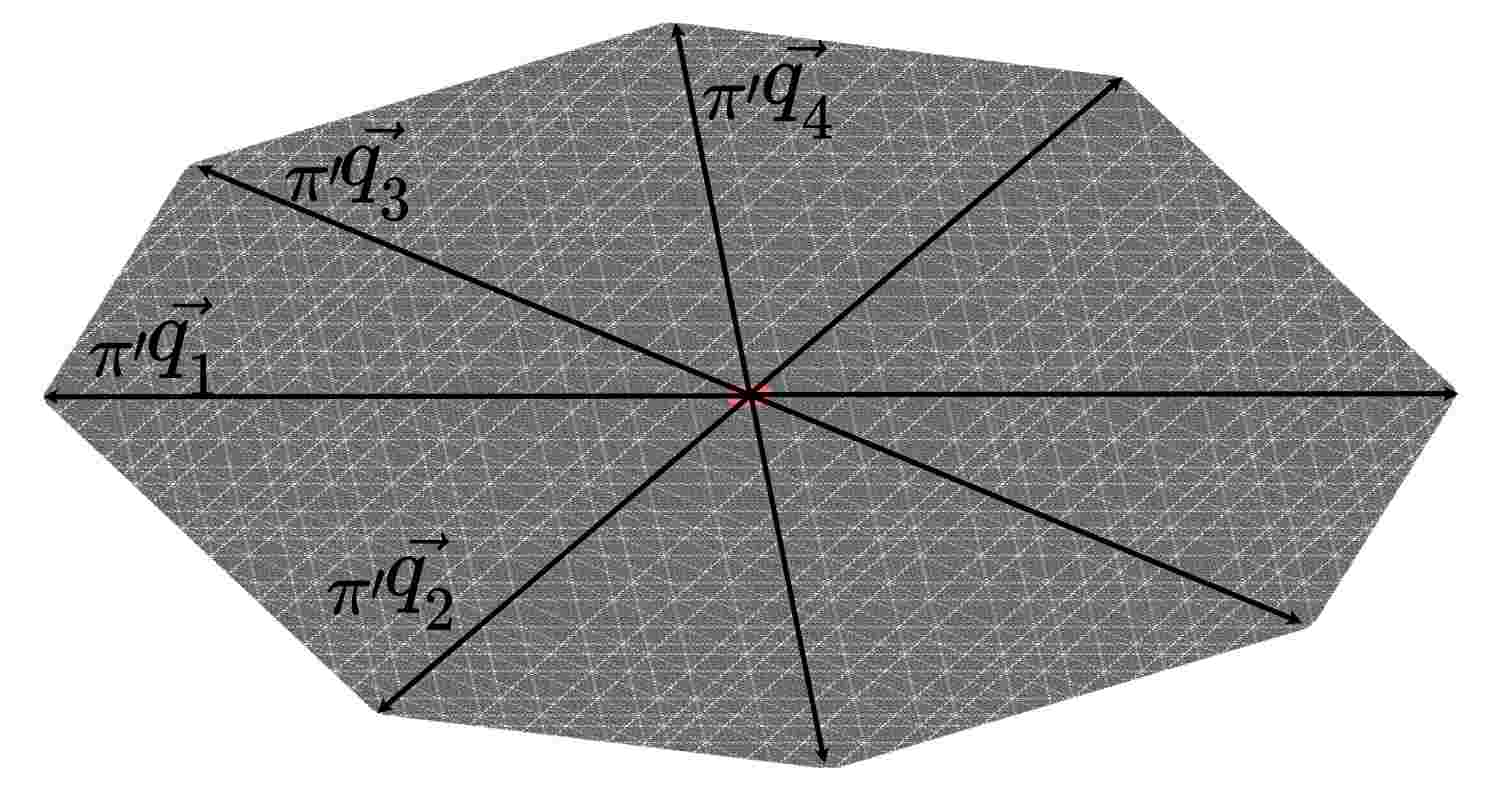}}
\caption{\label{fig:shape} The seed $\SSS$ is depicted with red. The vertices of $\HH_k$ are at a bounded distance from a convex polygon with vertices written as $\pm c_i \proj \vec{q_i}$ for $i=1,2,3,4$.}
\end{figure}

\textbf{Case 1.} It's the easier case of the two. Let $O$ be the intersection of perpendicular bisectors of $W(\SSS)$ and let $S_{ij}$ denote the sector which contains $\omega_{ij}$. Let $A \in \HH_{k+1}\setminus\HH_k$, for example, be a vertex with $\pi^\prime A \in S_{23}$, see Figure~\ref{fig:easy_case}. Consider subperiod lines $Q_2(A)$ and $Q_3(A)$. Assuming that we have chosen a big enough growth radius $r$, using Lemma~\ref{lem:folk} we state that in $r$-neighborhood of $A$ there exists two vertices $B \in Q_2(A)$ and $C \in Q_3(A)$ such that $\angle (\proj \ora{AB}, \proj  \ora{AC}) = \omega_{23}$. Since it is impossible to position window $W$ in a way that $B, C \in W$ and $A \notin W$, we conclude that $B$ and $C$ force $A$, see Figure~\ref{fig:easy_case_force}. We treat subcases when $\pi^\prime A$ projected to other sectors similarly.

\begin{figure}[htb]
	\center{\includegraphics[width=0.75\textwidth]
	{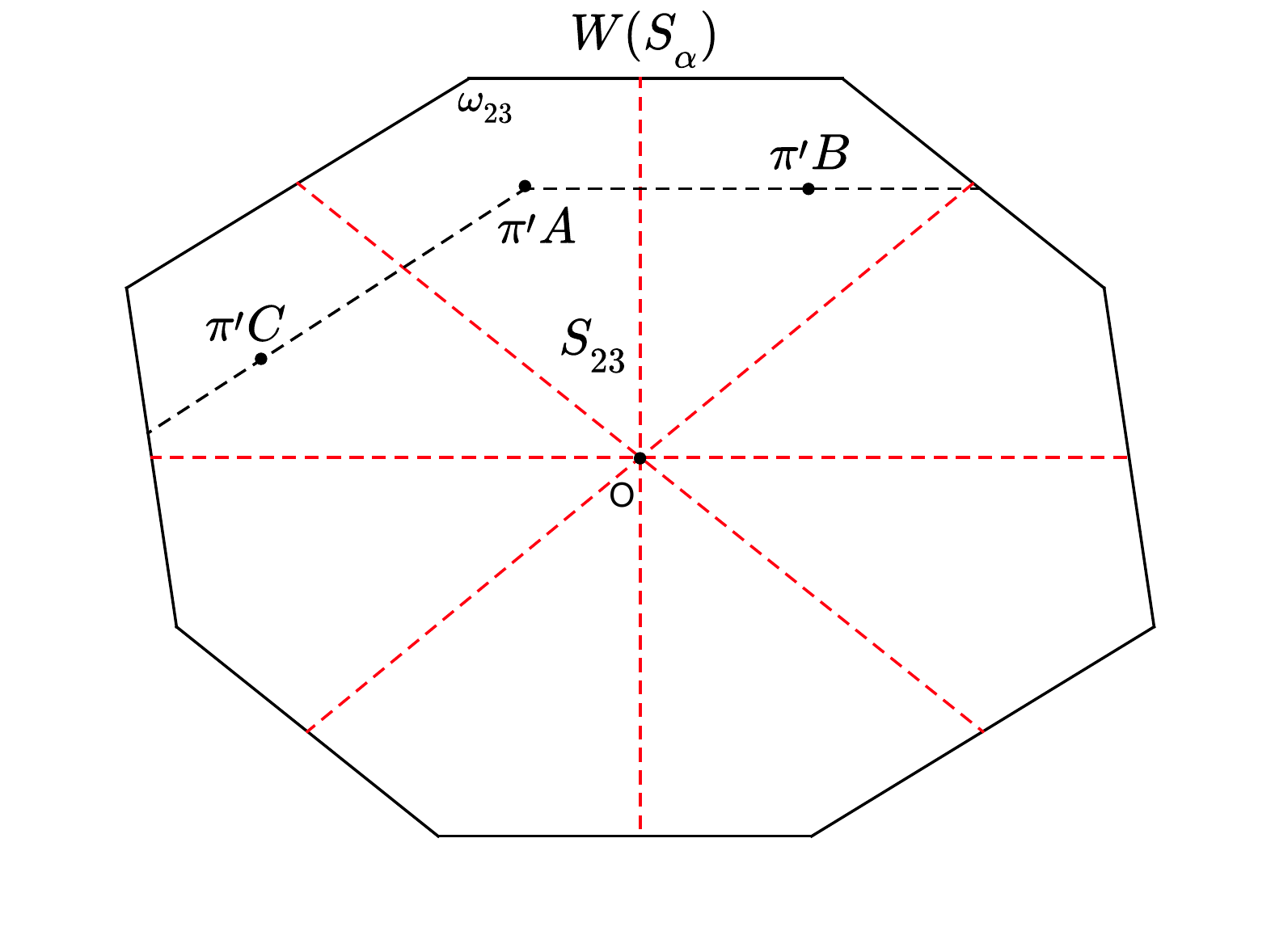}}
	\caption{\label{fig:easy_case} If $A$ belongs to the sector containing $\omega_{23}$, applying Lemma~\ref{lem:folk} we choose $r>$ such that there exists $B \in Q_3(A) \cap B_A(r) \cap \HH_k $ and $C\in Q_2(A) \cap B_A(r) \cap \HH_k$ such that $\angle (\proj \ora{AB}, \proj  \ora{AC}) = \omega_{23}$. The latter means that $B$ and $C$ force $A$.}
\end{figure}

\begin{figure}[htb]
	\center{\includegraphics[width=0.75\textwidth]
	{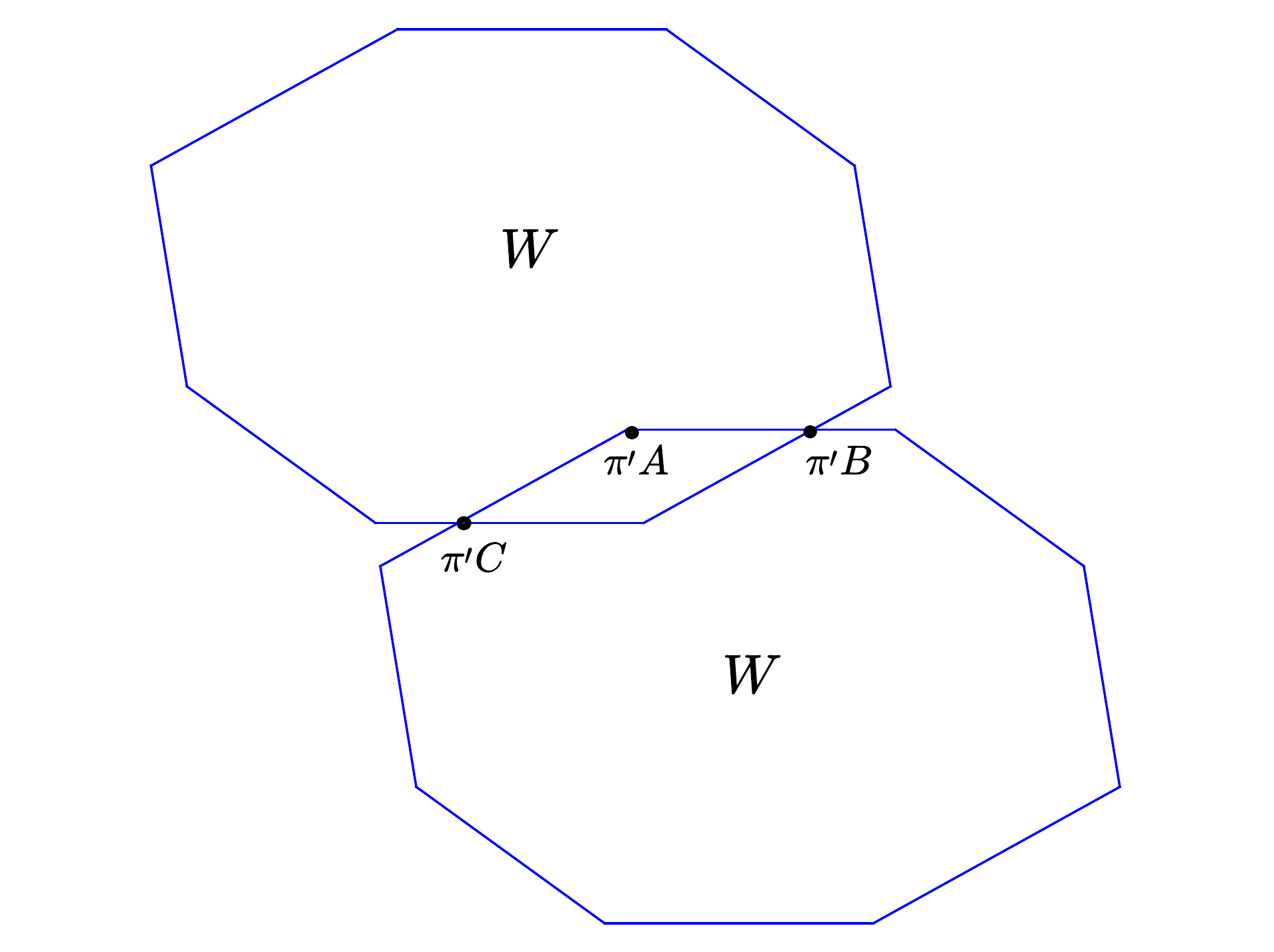}}
	\caption{\label{fig:easy_case_force} We have $B \in Q_3(A)$, $C\in Q_2(A)$ along with $\angle (\proj \ora{AB}, \proj  \ora{AC}) = \omega_{23}$. Since it is impossible to translate window $W$ in such a way that $B,C \in W$ but $A \notin W$, we conclude that vertices $B$ and $C$ force $A$. }
\end{figure}

\textbf{Case 2.} Let us say that $A \in \HH_k \setminus \HH_{k+1}$ is the vertex such that $\pi A$ is near a corner of $\HH_k$ pointed by $\pm q_2$. The latter means that $A$ sees each of its subperiod lines except for $Q_3(A)$. The proof that $A$ is forced by its neighborhood depends on the position of $\pi^\prime A$ relative to the window $W(\SSS_\alpha)$. We consider four subcases, see Figure~\ref{fig:hard_subcases}:

\begin{figure}[htb]
	\center{\includegraphics[width=0.75\textwidth]
	{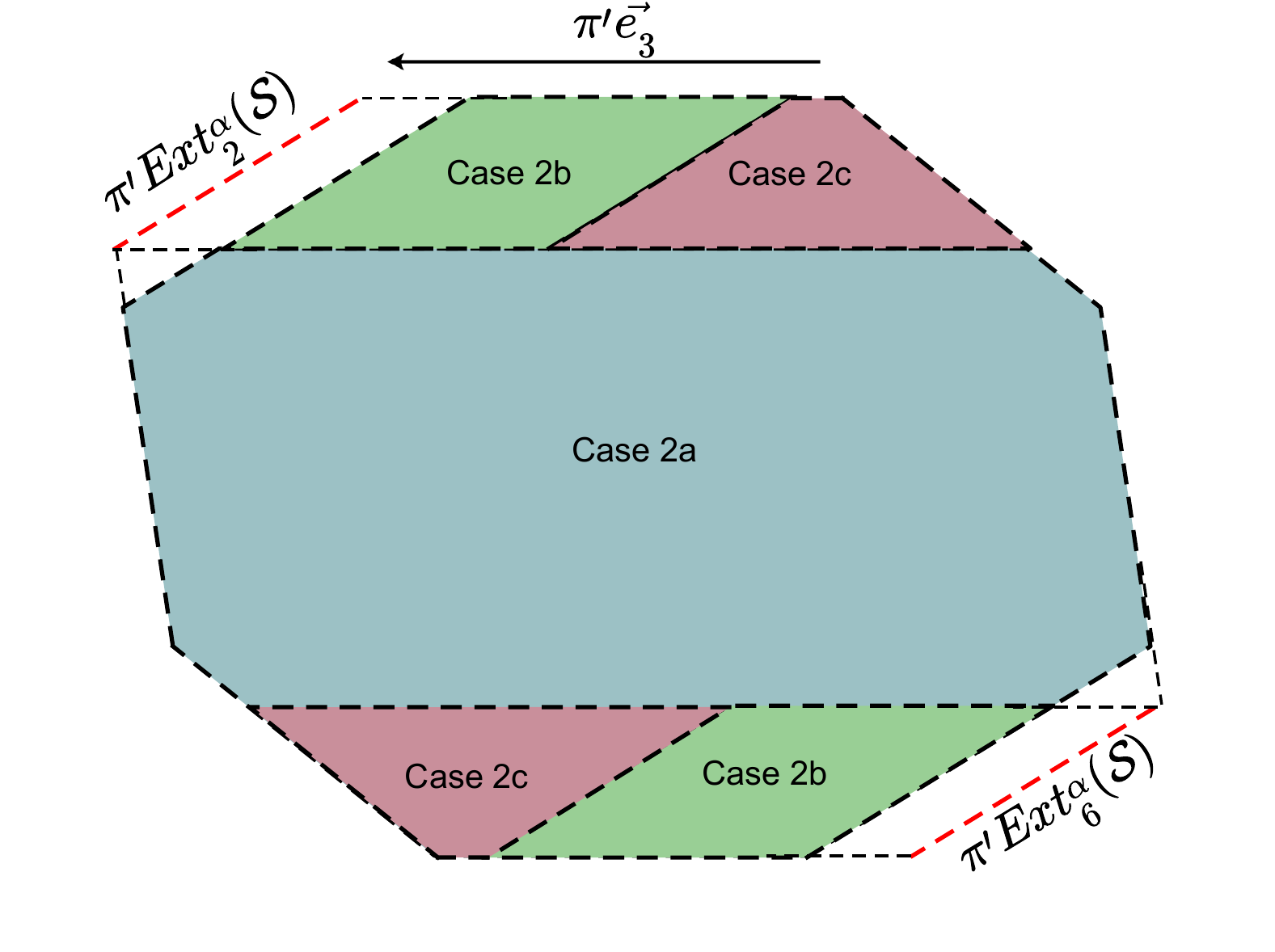}}
	\caption{\label{fig:hard_subcases} The subregions of $W(\SSS_\alpha)$ corresponding to subcases $2a-2c$ are depicted respectively with blue, green, and red. Subcase $2d$ concerning the continuation of $Ext^\alpha([\SSS])$ is depicted with red dashed line.}
\end{figure}

\textbf{Case 2a.}
Let $A$ be such that
\begin{equation}
\label{eq:hard_subcase_2a}
\pi^\prime A \in W(\mathcal{S}_\alpha) \setminus ( (\proj Ext^\alpha _2( \SSS ) \oplus \R\pi^\prime e_3) \cup (\proj Ext^\alpha _6( \SSS ) \oplus \R\pi^\prime e_3) )
\end{equation}

This subcase is very similar to Case $1$ but instead of all four subperiod lines, vertex $A$ sees just three. This downside is mitigated by the fact that distance between $\pi^\prime A$ and $\partial_3 W(\SSS_\alpha)$ or $\partial_7 W(\SSS_\alpha)$  is greater than some positive constant $\varepsilon(\alpha)$. By choosing $r = r(\varepsilon)$ big enough, by Lemma~\ref{lem:folk} we are guaranteed to find in the $r$-neighborhood of $A$ two vertices $B$ and $C$ belonging to different subperiod lines such that $\angle (\proj \overrightarrow{AB}, \proj  \overrightarrow{AC})$ equals to one of $\omega_{ij}$. Consequently, $A$ is forced by $B$ and $C$, see Figure~\ref{fig:hard_subcase_2a}.

\begin{figure}[htb]
	\center{\includegraphics[width=0.75\textwidth]
	{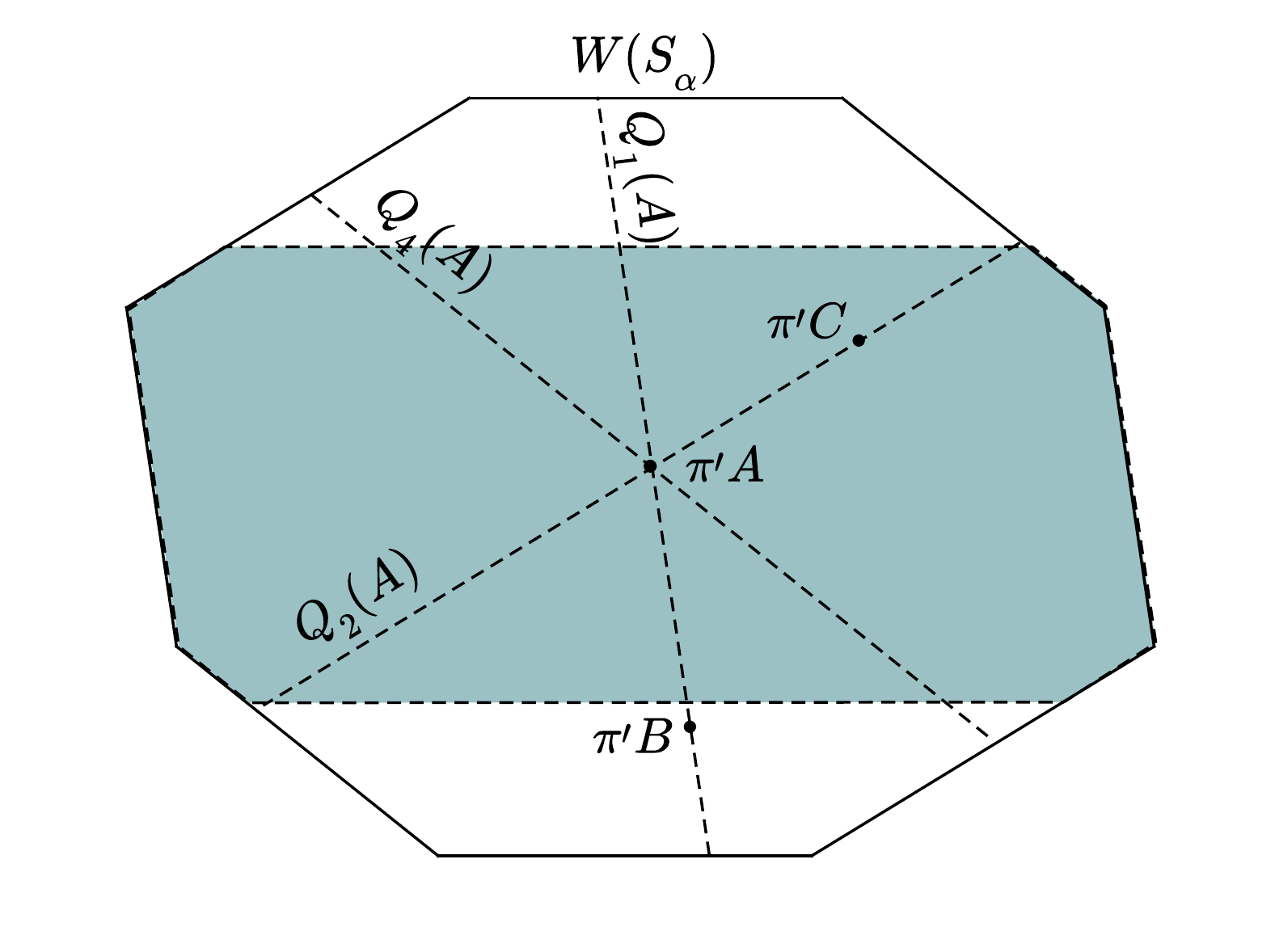}}
	\caption{\label{fig:hard_subcase_2a} Let $A$ be in subregion~\ref{eq:hard_subcase_2a} of $W(\SSS_\alpha)$. Along subperiods lines $Q_1(A),Q_2(A)$, and $Q_4(A)$, we find two vertices $B$ and $C$ such that $\angle (\proj \ora{AB}, \proj  \ora{AC}) $ equals to one of $\omega_i$. Similar to Case $1$, $B$ and $C$  force $A$.}
\end{figure}

\textbf{Case 2b.}
Let $A$ be such that
\begin{equation}
\label{eq:hard_subcase_2b}
\pi^\prime A \in W(\SSS_\alpha) \cap (  (\proj Ext^\alpha _2 (\SSS) \oplus \pi^\prime([0, -e_3])) \cup  (\proj Ext^\alpha _6 (\SSS) \oplus \pi^\prime([0, e_3]))).
\end{equation}

Since we are near the corner of the growing pattern, using Proposition~\ref{prop:shape} we state that given a large enough growth radius $r_1$, we see vertices from both extreme worms directed by $\pi q_2$ in the $r_1$-neighborhood of $A$. First, let $\pi^\prime A$ be closer to the $\partial_3 W(\SSS_\alpha)$ than to $\partial_7 W(\SSS_\alpha)$ and  let $B$ be a vertex from the $r_1$-neighborhood which belongs to the extreme worm $Ext^\alpha _2 ([\SSS])\cap\HH_k$, so we have $\pi ^\prime A$ and $\pi ^\prime B$ near abutting edges of the $W(\SSS_\alpha)$.

Consider the unit interval $(A, A+e_3)$, when projected to the perpendicular space, because of~($\ref{eq:constraints_alpha}$), it will necessarily intersect the line in the perpendicular space which contains $\pi ^\prime Ext^\alpha _2([\SSS])$, see Figure~\ref{fig:hard_subcase_2b}. Depending on the position of $B$, we choose one of two intervals $\proj (B, B+e_2), \proj (B, B-e_2)$ which intersects $\proj (A, A+e_3)$, let $X \in W$ be the intersection point. By Lemma~\ref{lem:third_intersection} we can choose $k$ so that in the $kr$-neighborhood of $A$ there is a vertex $C_1 \in [\SSS]$ such that interval $\pi^\prime (C_1, C_1 - e_4)$ also intersects $\pi^\prime (A, A + e_3)$ in the point $X$. Now consider the vertex $C_2 = C_1 - e_3$, by our initial assumptions it belongs to $[\SSS]_\alpha$ but may not belong to $\HH_k$. However, by Proposition~\ref{prop:shape}, there are vertices from $Q_4(C_2)$ in $\HH_k$, let $C \in Q_4(A)\cap \HH_k$ be the one closest to $A$. By Lemma~\ref{lem:cherez_odnu}, $B$ and $C$ forces $A$. 

If $\proj A$ is closer to the $\partial_7 W(\SSS_\alpha)$ than to $\partial_3 W(\SSS_\alpha)$, the proof remains the except for $B$ must be chosen along $Ext_6^\alpha([\SSS])$ instead of $Ext_2^\alpha([\SSS])$ and interval $(A, A-e_3)$ must be taken instead of $(A, A+e_3)$.

\begin{figure}[htb]
	\center{\includegraphics[width=0.75\textwidth]
	{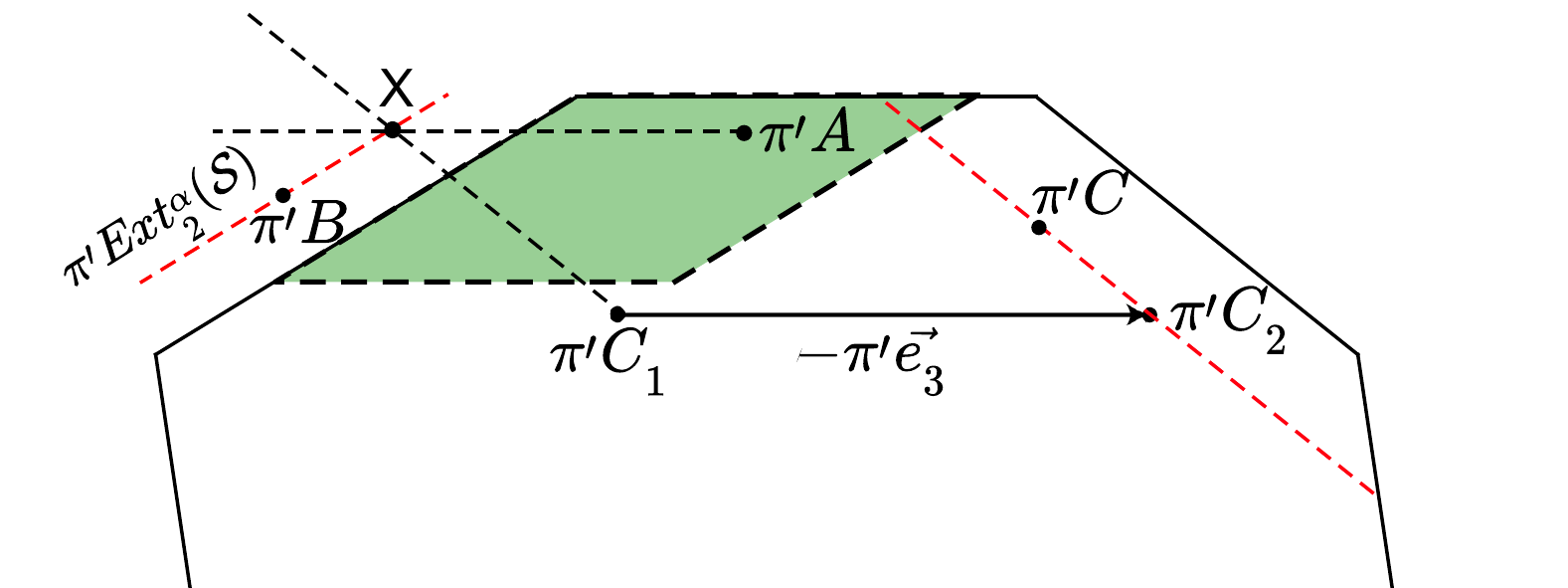}}
	\caption{\label{fig:hard_subcase_2b} In Case $2b$ when $\pi^\prime A$ projects to the subregion defined by~\ref{eq:hard_subcase_2b} depicted in green, we consider an intersection $X$ of $\proj(A, A+e_3)$ with either $\proj(B, B-e_2)$ or $\proj(B, B+e_2)$. Using Lemma~\ref{lem:third_intersection} we find interval $(C, C+e_4)$ with $\proj C_1 \in W(\SSS)$, whose projection intersects $\proj(A, A+e_3)$ also in $X$. Then, we consider $C_2 = C_1 - e_3$ and its subperiod line $Q_4(C_2)$. Let $C \in Q_4(C_2)$ be vertex from $\HH_k $ closest to $A$. Now, pair $B$ and $C$ forces $A$.}
\end{figure}

\textbf{Case 2c.}
Let $A$ be such that
\begin{align*}
	\pi^\prime A \in W(\SSS_\alpha)  \cap  
	(&(\proj Ext^\alpha _2(\SSS) \oplus \R\pi^\prime e_3) \setminus (\proj Ext^\alpha _2 (\SSS) \oplus \pi^\prime[0, -e_3]) \\  
	\cup &(\proj Ext^\alpha _6(\SSS) \oplus \R\pi^\prime e_3) \setminus (\proj Ext^\alpha _6 (\SSS) \oplus \pi^\prime[0, e_3] )),
	\numberthis
	\label{eq:hard_subcase_2c}
\end{align*}

Here we use the Lemma~\ref{lem:folk} two times consecutively. First, we choose a radius $r_1$ such that in the $r_1$-neighborhood of $A$ there is a vertex $B \in Q_3(A)$ which projects into the subregion of $W(\SSS_\alpha)$  defined by~(\ref{eq:hard_subcase_2a}), the one associated with the previous Case $2b$. Using the same reasoning as in Case $2b$, and possibly increasing the radius needed by $r_1$, we conclude that $B$ is forced by its local neighborhood.

Second, we choose $r_2$ such that in the $r_2$-neighborhood of $A$ contains a vertex $C \in \mathcal{H}_k \cap Q_4(A)$ which does not belong to the subregions~(\ref{eq:hard_subcase_2c}) and~(\ref{eq:hard_subcase_2b}) associated with Cases $2b$ and $2c$. Since $\angle (\proj \ora{AB}, \proj  \ora{AC}) = \omega_{34} = \omega_{78}$, we conclude $A$ is forced by $B$ and $C$, both of which are forced by a local neighborhood of $A$.

\begin{figure}[htb]
	\center{\includegraphics[width=0.75\textwidth]
	{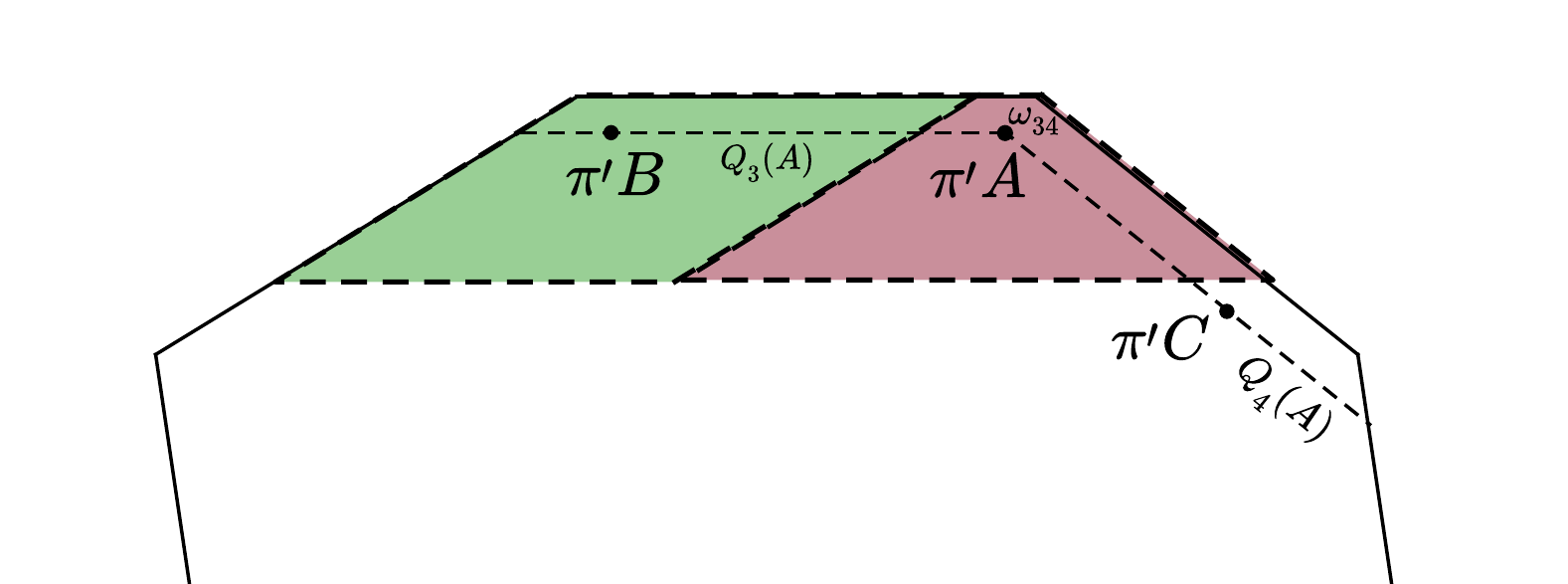}}
	\caption{\label{fig:hard_subcase_2c} Let $A$ be in subregion~\ref{eq:hard_subcase_2b} of $W(\SSS_\alpha)$. First, we choose $C \in Q_4(A) \cap \HH_k$ such that $\proj C$ does not belong to~(\ref{eq:hard_subcase_2b}) and~(\ref{eq:hard_subcase_2c}). Along subperiod line $Q_3(A)$ we find a vertex $B$ with $\proj B$ in~(\ref{eq:hard_subcase_2b}), which is forced as in Case $2b$. Since $\angle (\proj \ora{AB}, \proj  \ora{AC}) $ equals to $\omega_{34}$, $B$ and $C$ forces $A$.}
\end{figure}

\textbf{Case 2d.}

This is the last case, it is left to consider
$$
A \in (Ext_2^\alpha([\SSS]) \cup Ext_6^\alpha([\SSS])) \cap (\HH_{k+1} \setminus \HH_k)
$$
For example, let $A \in Ext^\alpha_2([\SSS]) \cap (\HH_{k+1} \setminus \HH_k) $, the proof once again depends on the position of $\pi^\prime A$ in the $W(\SSS_\alpha)$. If $\pi^\prime A$ is closer to $\partial_1 W(\SSS_\alpha)$ than to $\partial_3 W(\SSS_\alpha)$, then in the $r$-neighborhood of $A$, given that $r$ is big enough, there exists $B \in Ext_2 ^\alpha([\SSS])$ which is closer to $\partial_3 W(\SSS_\alpha)$ than $A$, as it is guaranteed by Lemma~\ref{lem:folk}. Since $A$ is near a corner of the growing pattern directed by $\vec{q_2}$, it sees every subperiod except for the third, let $B \in Q_1(A)$ be any vertex along the first subperiod which is already in place. Again, since $\angle (\proj \ora{AB}, \proj  \ora{AC}) = \omega_{12}$, $B$ and $C$ force $A$, see Figure~\ref{fig:extreme_worms_case_1}.

On the other hand, if $\pi^\prime A$ is closer to $\partial_3 W(\SSS_\alpha)$ than to $\partial_1 W(\SSS_\alpha)$, then using Lemma~\ref{lem:folk} we find a vertex $B$ which is closer to $\partial_1 W(\SSS_\alpha)$, when projected to the perpendicular space, than $A$. Then, we consider a vertex $C^\prime = A - e_3$, which belongs to $W(\SSS_\alpha)$, as it is guaranteed by~\ref{eq:constraints_alpha}, and its subperiod line $Q_4(A)$. Since $C^\prime$  also sees every subperiod except for the third, we can choose $C\in Q_4(A)\cap \HH_k$. By Lemma~\ref{lem:cherez_odnu}, $B$ and $C$ force $A$, see Figure~\ref{fig:extreme_worms_case_2}.

If $A\in Ext_6^\alpha(\SSS)$ instead of $Ext_2^\alpha(\SSS)$, the proof remains the same except that $B$ is taken along $Ext_6^\alpha([\SSS])$ and $C^\prime = A + e_3$.

That concludes the proof of Case $2$ for vertices near the two corners of the growing pattern directed by $\pi q_2$. The crucial property we used during the proof is that for vertices near a corner that do not see their $i$-th subperiod line and are projected close to $i$-th edge of $W(S_\alpha)$ (as in Case $2b$ and $2c$), to apply Lemma~\ref{lem:cherez_odnu}, the extreme worm passing through the corner must be of type $i-1 \pmod 8$ or $i+1 \pmod 8$. Fortunately for us, by Proposition~\ref{prop:shape}, this property is satisfied for every corner of the growing pattern. Consequently, the proof for vertices near other corners remains the same up to a rearrangement of indices.

Taking as seed any $n$-pattern, with $n>m$ does not change the reasoning since constraints $\ref{eq:constraints_alpha}$ remain satisfied.

\end{proof}

Theorem~\ref{thm:main} immediately follows from Lemma~\ref{lem:main}.

\begin{figure}[htb]
\centering
\begin{minipage}[t]{0.75\textwidth}
	\includegraphics[scale=0.75]{./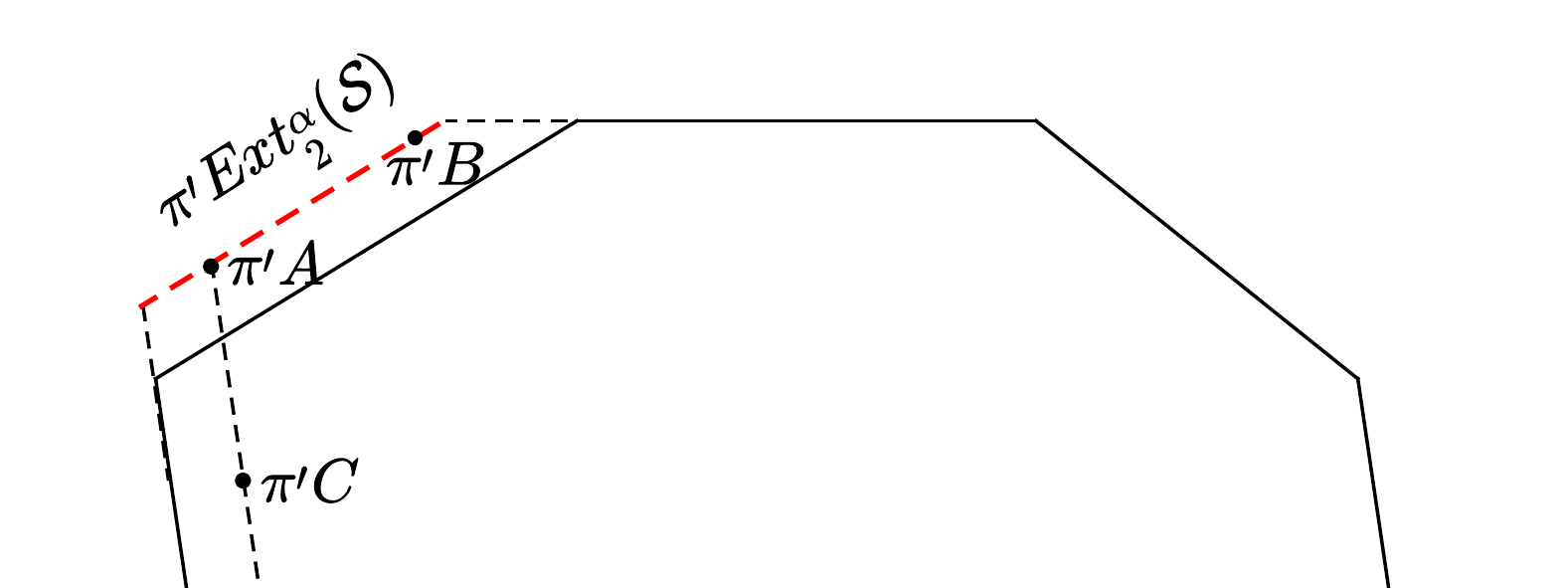}
	\caption{If $d(A, \partial_3 W(\SSS_\alpha)) > d(A, \partial_1 W(\SSS_\alpha))$, using Lemma~\ref{lem:folk} we find $B\in Ext_2 ^\alpha([\SSS])\cap \HH_k$ $B$ such that $d(B, \partial_1 W(\SSS_\alpha)) > d(A, \partial_1 W(\SSS_\alpha))$, and, since $A$ sees every subperiod except for the $3$rd, we find vertex $C \in Q_1(A)\cap \HH_k$. Pair $B$ and $C$ forces $A$.}
  \label{fig:extreme_worms_case_1}
\end{minipage}
\quad
\begin{minipage}[t]{0.75\linewidth}
	\includegraphics[scale=0.75]{./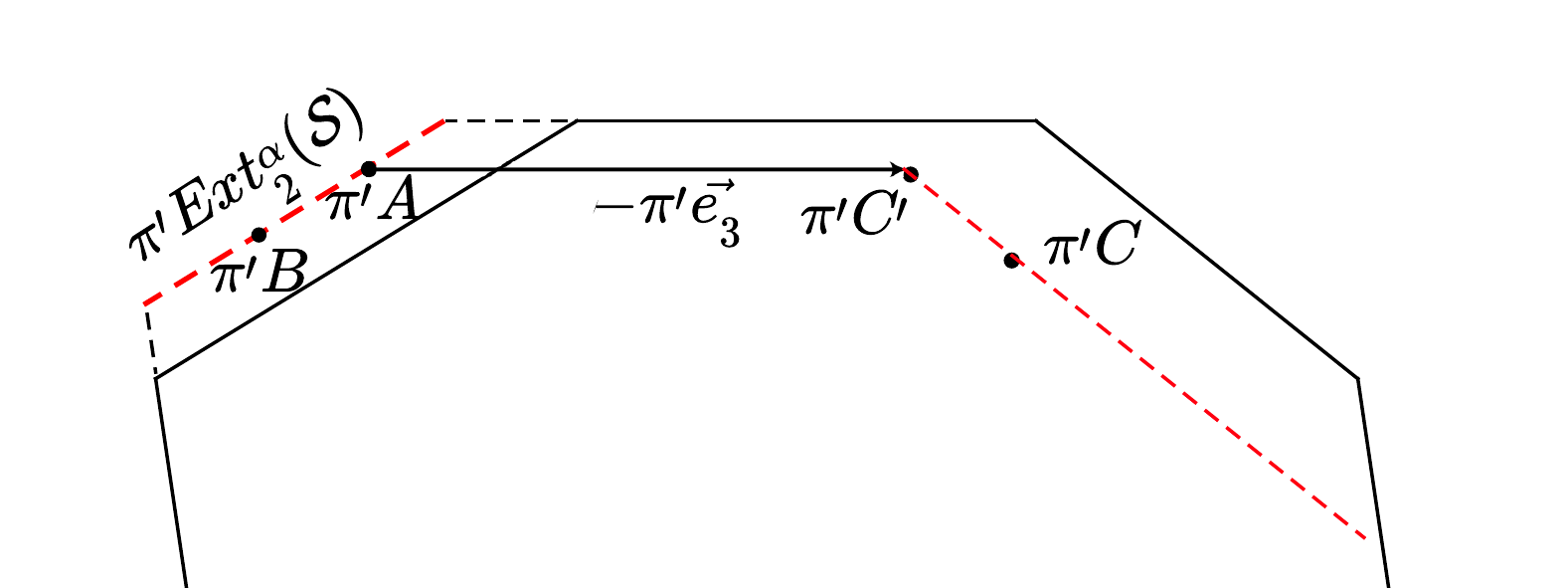}
	\caption{If $d(A, \partial_3 W(\SSS_\alpha)) < d(A, \partial_1 W(\SSS_\alpha))$, again using Lemma~\ref{lem:folk} we find $B \in Ext_2([\SSS]) ^\alpha \cap \HH_k$ such that $d(B, \partial_1 W(\SSS_\alpha)) < d(A, \partial_1 W(\SSS_\alpha))$. Then, consider a vertex $C^\prime = A - e_3 \in [\SSS_\alpha]$ and its subperiod line $Q_4(C)$. Using Lemma~\ref{prop:shape} we find vertex $C\in Q_4(C^\prime) \cap \HH_k$. Pair $B$ and $C$ forces $A$.}
  \label{fig:extreme_worms_case_2}
\end{minipage}
\end{figure}

\FloatBarrier

\begin{center}
\line(1,0){450}
\end{center}

\bibliographystyle{alpha}
\bibliography{eux2}
\end{document}